\newtheorem{definition}{Definition}
\newtheorem{proposition}{\textbf{Proposition}}
\newtheorem{remark}{\textbf{Remark}}
\begin{document}

\title{MambaJSCC: Adaptive Deep Joint Source-Channel Coding with Generalized State Space Model}
\author{Tong Wu, Zhiyong Chen, Meixia Tao, \emph{Fellow, IEEE}, Yaping Sun, Xiaodong Xu, \\Wenjun Zhang, \emph{Fellow, IEEE}, and Ping Zhang, \emph{Fellow, IEEE}
\thanks{The paper will be presented in part at IEEE GLOBECOM 2024 \cite{gc-MambaJSCC}.}
\thanks{T. Wu, Z. Chen, M. Tao and W. Zhang are with the Cooperative Medianet Innovation Center (CMIC), Shanghai Jiao Tong University, Shanghai 200240, China, and Shanghai Key Laboratory of Digital Media Processing and Transmission (e-mail: \{wu\_tong, zhiyongchen, mxtao, zhangwenjun\}@sjtu.edu.cn). (\textit{Corresponding author: Zhiyong Chen})}
\thanks{Y. Sun is with the Department of Broadband Communication, Pengcheng Laboratory, Shenzhen 518055, China (e-mail: sunyp@pcl.ac.cn).}
\thanks{X. Xu and P. Zhang are with the State Key Laboratory of Networking and Switching Technology, Beijing University of Posts and Telecommunications, Beijing 100876, China, and also with the Department of Broadband Communication, Peng Cheng Laboratory, Shenzhen 518055, China (e-mail: xuxd@pcl.ac.cn; pzhang@bupt.edu.cn).}}
	
\maketitle
\begin{abstract}
Lightweight and efficient neural network models for deep joint source-channel coding (JSCC) are crucial for semantic communications. In this paper, we propose a novel JSCC architecture, named MambaJSCC, that achieves state-of-the-art performance with low computational and parameter overhead. MambaJSCC utilizes the visual state space model with channel adaptation (VSSM-CA) blocks as its backbone for transmitting images over wireless channels, where the VSSM-CA primarily consists of the generalized state space models (GSSM) and the zero-parameter, zero-computational channel adaptation method (CSI-ReST). We design the GSSM module, leveraging reversible matrix transformations to express generalized scan expanding operations, and theoretically prove that two GSSM modules can effectively capture global information. We discover that GSSM inherently possesses the ability to adapt to channels, a form of endogenous intelligence. Based on this, we design the CSI-ReST method, which injects channel state information (CSI) into the initial state of GSSM to utilize its native response, and into the residual state to mitigate CSI forgetting, enabling effective channel adaptation without introducing additional computational and parameter overhead. Experimental results show that MambaJSCC not only outperforms existing JSCC methods (e.g., SwinJSCC) across various scenarios but also significantly reduces parameter size, computational overhead, and inference delay. 
\end{abstract}

\begin{IEEEkeywords}
  Mamba, joint source-channel coding, channel adaptation, wireless image transmission.
\end{IEEEkeywords}


\section{Introduction}
Deep learning-based joint source-channel coding (JSCC), which integrates source and channel coding into a single process using neural networks, has recently garnered significant attention as a method to enhance the efficiency of information transmission over wireless channels. Leveraging deep learning, JSCC can effectively extract and utilize the semantic features of data, which are critical for understanding the meaning of information in task-specific contexts. Consequently, JSCC is emerging as a key technology for enabling semantic communications \cite{Qin2,Qin1}, a pivotal element in the development of sixth-generation wireless networks\cite{CDDM,Bo,Dai}.

Recent research on JSCC has primarily focused on finding suitable approaches to maximize end-to-end performance. For text transmission, \cite{text} proposes a JSCC framework with a joint iterative decoder to improve quality. For video transmission, \cite{video} introduces a deep video semantic transmission system based on JSCC to optimize overall performance. For specialized content types, such as point cloud, \cite{pointcloud} proposes an appropriate JSCC architecture. Channel state information (CSI) estimated at the receiver is transmitted back to the transmitter utilizing JSCC in \cite{JSCC-CSI}. For image transmission, DeepJSCC, proposed in \cite{gundu2019}, relies on convolutional neural networks (CNNs) and outperforms traditional separation-based schemes, e.g., JPEG2000 and capacity-achieving channel code. To further improve JSCC performance, \cite{ViT-MIMO, NTSCC} develop the backbone module of JSCC from CNN to Vision Transformer (ViT). Furthermore, with the emergence of the Swin Transformer\cite{Ze},  SwinJSCC has been proposed in \cite{SwinJSCC}, which replaces CNNs with Swin Transformers. It is adopted in \cite{NTSCC+} to replace ViT as an improved version of \cite{NTSCC}.

Meanwhile, channel adaptation methods, which help mitigate performance degradation caused by the mismatch of channel states between training and evaluation stages, are essential for improving the performance of a single model across a wide range of channel conditions. These methods also reduce the training burden and save parameters of multiply JSCC models. ADJSCC, proposed in \cite{ADJSCC}, utilizes the attention feature modules to incorporate the signal-to-noise (SNR) into the coding process. Hyper-AJSCC is introduced in \cite{Hyper-AJSCC}, which employs a hypernetwork to adjust the parameters based on the SNR, achieving similar performance to ADJSCC with significantly fewer parameters and computational overhead. A Gated net is proposed in \cite{gate-adaptive} that filters semantics according to varying SNRs. These works focus on channel adaptation methods for CNN-based models. For Transformer-based models, \cite{SwinJSCC} introduces a plug-in module called Channel ModNet. A semantic-aware adaptive channel encoder is developed in \cite{GRACE} for integrating SNR. Additionally, \cite{CSI-embedding} proposes the CSI Embedding method, integrating multi-user SNRs into the Swin Transformer blocks in each coding stage, achieving similar performance to Channel ModNet with far fewer parameters and computational overhead.

Undoubtedly, designing deep learning-based JSCC models involves more than just optimizing end-to-end performance across varying channel states; it also requires balancing the complexity of artificial intelligence (AI) algorithms. This complexity typically includes computational overhead and the number of parameters. Computational overhead directly impacts inference delay (ID), and the larger the number of parameters, the greater the challenge of deploying these models on resource-constrained devices such as edge devices. Performance and complexity are mainly influenced by two core factors: the \textit{backbone model structure} (e.g., CNN or Transformer) and the \textit{channel adaptation method}. For backbone model structures, CNN-based models such as DeepJSCC and ADJSCC are often considered inadequate due to the inherent limitations of CNNs. In contrast, Transformer-based models like SwinJSCC offer superior performance, leveraging the advanced Swin Transformer, though this comes at the cost of higher computational and parameter overhead. Regarding channel adaptation methods, previous approaches have introduced additional modules to integrate CSI, similar to \textbf{plug-in intelligence}, which increases the complexity of a single model. For example, Channel ModNet in SwinJSCC adds an extra $9.87M$ parameters and $3.47G$ multiply-accumulate operations (MACs) for images of $512\times512$ size.


More recently, a novel AI model structure called \textbf{Mamba}, proposed in \cite{S6}, has attracted significant attention in the AI community. Mamba is based on selective structure state space models (SSM), which enable it to process sequences with data-dependent dynamic weights in \textbf{linear complexity}. As a result, Mamba achieves remarkable performance in natural language processing tasks while maintaining low computational complexity and parameter overhead compared to Transformers. Mamba-2 \cite{Mamba2} further demonstrates the equivalence between Mamba and linear attention. In the field of computer vision\cite{Survey}, VMamba, proposed in \cite{VSSM}, is used for visual representation learning by cross expanding the image in four directions, achieving promising results in vision tasks. Subsequently, \cite{Zigma, localmamba, EfficientVMamba, PlainMamba} explored various scan directions for improved learning. However, the lack of a unified mathematical expression for all scan expansion techniques limits the mathematical derivation necessary to ensure Vision Mamba's ability to capture global information. Furthermore, to the best of our knowledge, no research has yet explored the integration of Mamba with channel adaptation methods.

To address these challenges, we propose a novel, lightweight, and efficient JSCC method based on Mamba, named \textbf{MambaJSCC}, for image transmission over wireless channels. Specifically, we apply a reversible matrix transformation to unify all the scan expanding operations into a mathematical expression, enabling the SSM to be equipped with arbitrary scan schemes as a generalized SSM (GSSM). Based on this mathematical expression, we theoretically prove that two GSSM modules are sufficient to capture global information, thus achieving superior performance. More interestingly, by analyzing Mamba's unique response to the initial state, \emph{we discover that GSSM inherently possesses the ability to adapt to channels, a form of \textbf{endogenous intelligence}}. Building on this insight, we develop a zero-parameter, zero-computation channel adaptation method, called CSI-ReST, for MambaJSCC. The proposed CSI-ReST method injects the CSI into the initial state of the GSSM. As the state is continuously updated, the CSI gradually fades. To address this, inspired by the ResNet \cite{ResNet} approach, CSI-ReST also injects the CSI into the residual state of the GSSM benefitting from its high-dimension characteristic, enabling channel adaptation without incurring additional computational and parameter overhead. Consequently, the two GSSM modules with CSI-ReST, combined with a door-control structure, form the VSSM-CA modules. Utilizing the powerful VSSM-CA modules as the backbone, MambaJSCC is designed in a hierarchical structure with multi-stages to handle varying numbers of patches of different sizes, aiming for effective encoding and decoding.

The main contributions of the paper are summarized as follows.
\begin{itemize}
  \item \textbf{Architecture Design:} We propose MambaJSCC, a novel JSCC architecture that offers state-of-the-art performance with significantly low computational and parameter overhead. In the MambaJSCC architecture, the VSSM-CA module with GSSM serves as the backbone model structure, and the endogenous intelligence-based CSI-ReST is used as the channel adaptation method. The architecture is hierarchical and multi-stages to handle patches at varying numbers and different resolutions.
  \item \textbf{Theoretical Analysis:} We design the GSSM module, utilizing reversible matrix transformations to describe arbitrary generalized scan expanding operations. Furthermore, we derive the closed-form expression for the output of the GSSM module. For the first time, we theoretically prove that the two GSSM modules with bidirectional scanning, similar to Transformers, are capable of effectively capturing global information.
  \item \textbf{CSI-ReST Method:} We develop the zero-parameter, zero-computation CSI-ReST method to leverage the endogenous capability of GSSM for channel adaptation. This method injects the CSI into the initial state to harness the native response and into the residual state to mitigate CSI forgetting, enabling effective channel adaptation without introducing additional computational and parameter overhead.
  \item \textbf{Experiments Evaluation:} We conduct extensive experiments using diverse data with varying contents and resolutions. The results show that MambaJSCC outperforms all major JSCC architectures in terms of distortion and perception. Furthermore, compared to SwinJSCC with same block number, MambaJSCC exhibits a significantly reduced parameter size, computational overhead, and inference delay e.g., $0.52$ dB gain in peak-signal-to-noise ratio (PSNR) while requiring only 72\% MACs, 51\% of the parameters, and 91\% of ID.

\end{itemize}

The rest of the paper is organized as follows. The preliminary of SSM and the system framework of MambaJSCC are presented in Section \ref{II}. The details of the proposed GSSM mathematical expressions and the CSI-ReST method, along with the structure of VSSM-CA are illustrated in Section \ref{III}. Finally, extensive experimental results are provided in Section \ref{IV}, and conclusions are drawn in Section \ref{V}.

\begin{figure*}[t]
  \begin{center}
    \includegraphics[width=0.95\textwidth]{./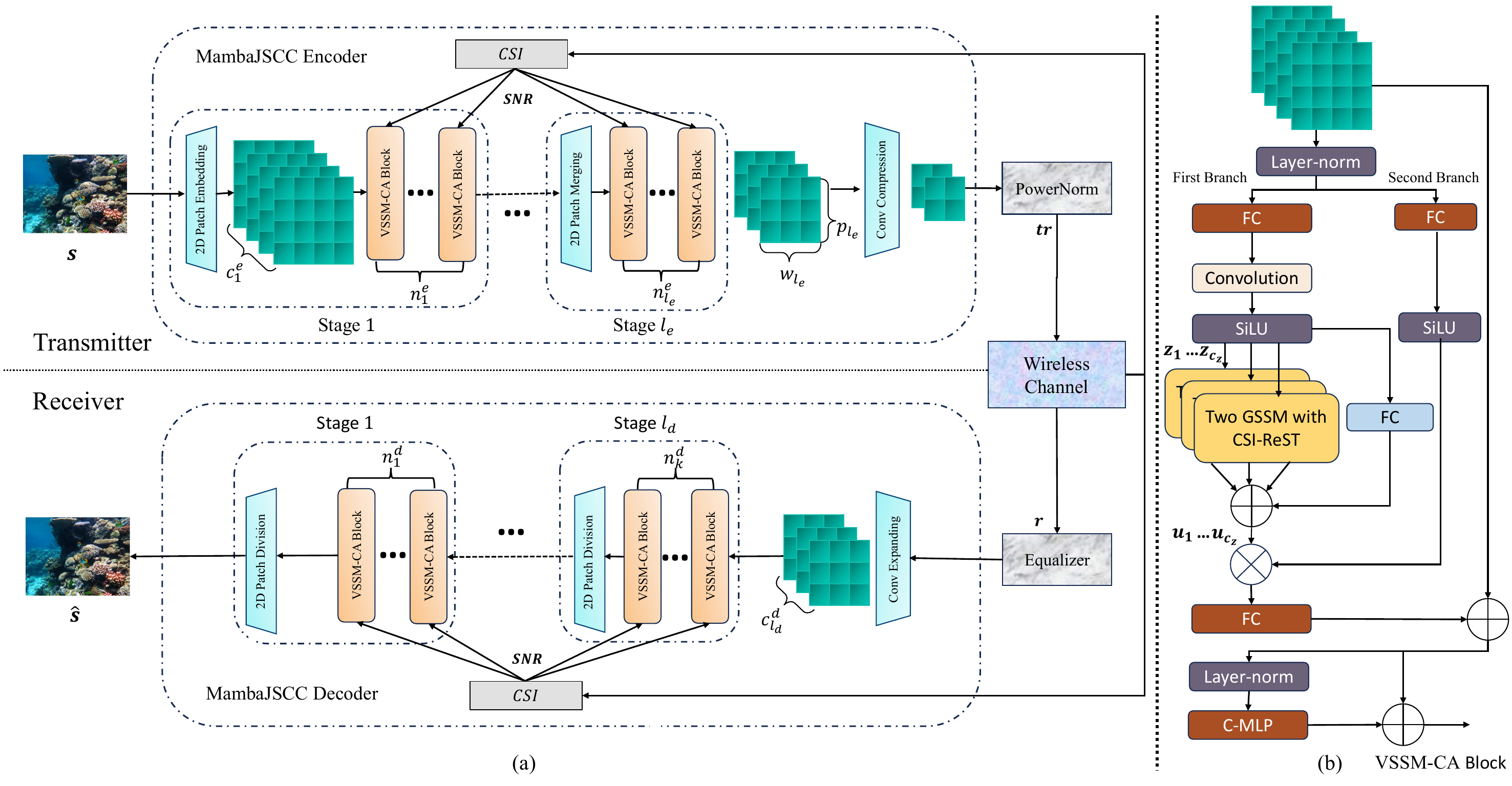}
  \end{center}
    \caption{{(a) The overall architecture of the proposed MambaJSCC. (b) The structure of the VSSM-CA block.}}
    \label{MambaJSCC}
    \vspace{-0.5 cm}
\end{figure*}

\section{System Model of MambaJSCC}\label{II}
\subsection{Review of State Space Models}
State space models are a class of sequential neural network models inspired by linear time-invariant (LTI) systems and are closely related to CNNs and recurrent neural networks (RNNs). In general, an LTI system maps a one-dimension (1D) function or sequence $x(t)\in\mathbb{R}$ to another one-dimension function or sequence $y(t)\in\mathbb{R}$ through a latent state space $\mathbf{h}(t)\in\mathbb{R}^N$. Here, $N$ represents the dimension of the latent state space. This process can be formulated using linear ordinary differential equations (ODEs) as follows:
\begin{align}\label{LTI}
  {\mathbf{h}^{\prime}}(t)=\tilde{\mathbf{A}}\mathbf{h}(t)+\tilde{\mathbf{B}}x(t),~~ y(t)=\tilde{\mathbf{C}}\mathbf{h}(t),
\end{align} 
where $\tilde{\mathbf{A}}\in \mathbb{R}^{N\times N}$, $\tilde{\mathbf{B}}\in \mathbb{R}^{N\times 1}$, and $\tilde{\mathbf{C}}\in \mathbb{R}^{1\times N}$ are the state matrix, input matrix, and output matrix, respectively.

The OEDs represent continuous-time systems, allowing the matrices to naturally become time-varying. Meanwhile, to integrate these continuous-time systems into deep learning frameworks, a discretization process is required. A widely-adopted method for discretization is the zero-order hold (ZOH) rule. The time-varying and discretized version of $\tilde{\mathbf{A}}$, $\tilde{\mathbf{B}}$, $\tilde{\mathbf{C}}$ are expressed as $\mathbf{A}=\mathbf{A}_{1:T} \in \mathbb{R}^{T\times N\times N}$, $\mathbf{B}=\mathbf{B}_{1:T}\in \mathbb{R}^{T\times N\times 1}$, $\mathbf{C}=\mathbf{C}_{1:T}\in \mathbb{R}^{T\times 1\times N}$. The sequence $x(t)$ is discretized as $\mathbf{x} \in \mathbb{R}^T$. The corresponding system output is $\mathbf{y} \in \mathbb{R}^T$, and the discretized hidden state is represented by $\mathbf{h}\in \mathbb{R}^{T\times N}$, where $T$ is the length of the sequence.
In this paper, we use the subscript $t\in{1,2,...,T}$ on these matrices and sequences to represent their values at time $t$. Therefore, (\ref{LTI}) can be rewritten in its discrete, and time-varying form as follows \cite{Mamba2}:
\begin{align}\label{SSM}
  \mathbf{h}_{t}=\mathbf{{A}}_t\mathbf{h}_{t-1}+\mathbf{B}_tx_t,~~
 y_t=\mathbf{C}_t\mathbf{h}_t.
\end{align}

For two-dimension (2D) sources, such as images, a commonly used and well-developed method for SSM is the scan expanding and scan recovery method\cite{VSSM,Survey,localmamba,PlainMamba,EfficientVMamba,Zigma}. This method flattens a 2D patch to several 1D sequences in designated directions, followed by SSMs. After that, these sequences are reconstructed back into 2D patches in the corresponding directions, and merged by adding together into one patch.

\subsection{An Overview of MambaJSCC}
The overall architecture of MambaJSCC is illustrated in Fig. \ref{MambaJSCC}(a). In the transmitter, the input image $\mathbf{s}\in\mathbb{R}^{3 \times P \times W}$ is encoded into the channel input signal $\mathbf{q}\in\mathbb{C}^{c \times p \times w}$ by the MambaJSCC encoder, where $P$ and $W$ represent the height and width of the source image and $c$, $p$, $w$ denote the patch number, height and weight of the channel input signal, respectively. This encoding process involves $l_e$ coding stages and a convolution compression module.

In the first coding stage, the input image $\mathbf{s}$ is initially fed into a 2D patch embedding module, which outputs $c_1^e$ patches, each of size $p_1\times w_1$. These patches are processed by $n_1^e$ VSSM-CA blocks, which incorporate the CSI to adaptively extract and encode features from the patches, while maintaining their number and size. The output patches are then fed into Stage 2, where a patch merging module is applied. This module merges the $c_1^e$ patches of size $p_1 \times w_1$ into a greater number of patches $c_2^e$, with a smaller size $p_2 \times w_2$, to capture more refined information. Following the merging, $n_2^e$ VSSM-CA blocks with CSI are employed to process these patches. This process is repeated for ($l_e-1$) stages in the MambaJSCC encoder, where each stage $k$ $(k=2,3,...,l_e)$ involves the patch merging module producing $c_k^e$ patches of size $p_k \times w_k$, followed by $n_k^e$ VSSM-CA blocks processing the patches. At the final stage of the encoder, the output patches are fed into a convolution compression module, where a CNN compresses the patches to the designated number of channel uses. After compression, the patches are normalized for power and transmitted over the wireless channel, which is assumed to be either an additive white Gaussian noise (AWGN) channel or a Rayleigh fading channel. 

At the receiver, the MambaJSCC decoder first applies an equalizer, such as the minimum mean square error (MMSE) equalizer, to equalize the received signal $\mathbf{r}$ in the case of the Rayleigh fading channel. It then decodes the signal into $\mathbf{\hat{s}} \in \mathbb{R}^{3 \times P \times W}$ for reconstruction. The MambaJSCC decoder consists of $l_d$ decoding stages and a convolution expanding module. Specifically, after equalization, the convolution expanding module expands the equalized signal into $c^d_{l_d}$ 2D patches, each with size $p_{l_d} \times w_{l_d}$, matching the output patches from Stage $l_e$ in the encoder. These patches are then fed into $l_d$ concatenated stages, denoted as Stage $k,k=l_d,l_d-1,...,1$. For $k \ge 2$, each stage initially decodes the patches using $n_k^d$ VSSM-CA blocks along with the CSI, which is obtained via channel estimation. The output patches are then divided into fewer but larger patches $c_{k-1}^d$ with size $p_{k-1} \times w_{k-1}$. This step is performed by a 2D patch division module that first applies layer normalization to the input patches, followed by a full-connection (FC) layer that adjusts the number of patches from $c^d_k$ to $c^d_{k-1}\times\frac{p_{k-1}w_{k-1}}{p_kw_k}$. The goal of expanding the number of patches is to organize every $\frac{p_{k-1}w_{k-1}}{p_kw_k}$ patches into $c_{k-1}^d$ groups. Within each group, a pixel shuffle operation is performed to cyclically select elements, creating a new patch that is with $\frac{p_{k-1}w_{k-1}}{p_kw_k}$ times larger size. As a result, the module outputs $c^d_{k-1}$ new patches, each with a larger size of $p_{k-1}\times w_{k-1}$. In Stage 1, the 2D patch division module outputs three patches, each with size $P \times W$ for reconstruction.

The MambaJSCC encoder and decoder are jointly trained with the objective of minimizing the distortion between the input image and the reconstructed image. Therefore, the loss function $L$ is formulated as follows:
\begin{align}
  L=d(\mathbf{s},\mathbf{\hat{}s}),
\end{align}
where $d(\cdot)$ is the distortion or perception loss function.

\section{The Design of VSSM-CA module}\label{III}
In this section, we discuss the two core technologies for the VSSM-CA module: the GSSM and the CSI-ReST channel adaptation method. Using these two core technologies, we present the overall structure of the VSSM-CA module.
\subsection{The GSSM Module} 
To develop the GSSM module for capturing global information, we establish a unified mathematical expression for the SSM module with arbitrary extended scan expanding and recovery operations based on matrix transformation.
\begin{definition}
   \textbf{SSM operator} is defined as a sequence transformation, which is a parameterized mapping from a sequence $\mathbf{x} \in \mathbb{R}^T$ to another sequence $\mathbf{y} \in \mathbb{R}^T$ as defined in (\ref{SSM}). This mapping is denoted as $\mathbf{y}=SSM(\mathbf{A},\mathbf{B},\mathbf{C})(\mathbf{x})=SSM(\mathbf{A}_{1:T},\mathbf{B}_{1:T},\mathbf{C}_{1:T})(\mathbf{x})$, where $\mathbf{A}_{1:T},\mathbf{B}_{1:T}$ and $\mathbf{C}_{1:T}$ are the parameters of the operator, and $T$ refers to the sequence length.
\end{definition}
Thus, the operations in the SSM modules are represented by an operator parameterized by $\mathbf{A}_{1:T},\mathbf{B}_{1:T}$ and $\mathbf{C}_{1:T}$.

\begin{definition}
  \textbf{Matrix transformation} is defined as a special case of sequence transformation, where the transformation can be written in the form $\mathbf{y}=\mathbf{M_\theta}\mathbf{x}$, with $\mathbf{M}_\theta$ being a matrix derived from the parameters $\theta$. For simplicity and clarity, we denote the matrix as $\mathbf{M_\theta}$, omitting the parameters $\theta$ when they are clear from context.
\end{definition}
Accordingly, the relationship between the SSM operator with $\mathbf{h}_0=\mathbf{0}$ and the matrix transformation is established in \cite{Mamba2}, which is given by 
  \begin{equation}
    \mathbf{y}=SSM(\mathbf{A},\mathbf{B},\mathbf{C})(\mathbf{x})=\mathbf{Mx},\label{SSM-0}\\
  \end{equation}
where the element in the $i$-th row and $j$-th column of $\mathbf{M}$ is derived as $\mathbf{C}_i\mathbf{A}^\times_{i:j}\mathbf{B}_j$. Here, $\mathbf{I}$ is the identity matrix, and 
\begin{align}\label{Atimes}
\mathbf{A}^\times_{i:j}=\left\{
\begin{aligned}
&0, \  \ \ i<j,\\
&\mathbf{I}, \ \ \  i=j,\\
&\mathbf{A}_i \mathbf{A}_{i-1} \cdot \cdot \cdot \mathbf{A}_{j+1}, \ \ \  i>j.
\end{aligned}
\right.
\end{align}

The key technology in extending 1D SSM to 2D vision SSM (VSSM) for vision tasks is the scan expansion approach. Using this approach, VSSM first unfolds the 2D visual features into several 1D sequences. These sequences are then processed by the SSM operator and subsequently reassembled into the 2D feature. The scan directions dictate the unfolding process, playing a crucial role in VSSM's ability to effectively learn and capture visual features.

To investigate the influence of different scan directions on the VSSM module, it is necessary to define a unified mathematical expression for both the scan directions and  the SSM. By analyzing their characteristics, we can observe that the scan expanding operation rearranges the elements of a 2D feature into a 1D sequence, with the scan directions determining the positions of the elements in the sequence. Thus, the essence of scan directions is the reordering of vectorized features. This reordering can be viewed as the application of a finite number of row-swapping transformations to the vectorized feature. From this aspect, the scan expanding process in a specific direction can be expressed as a combination of a vectorization operation and a matrix transformation derived from finite row-swapping operations. We thus have the following proposition.

\begin{proposition}
  The \textbf{Scan Expanding} from a 2D feature $\mathbf{Z}$ to a 1D sequence $\mathbf{x^\zeta}$ in the direction $\zeta $ can be formulated as 
  \begin{align}\label{expand}
    \mathbf{x}^\zeta=\mathbf{R}_\zeta vec(\mathbf{Z})=\mathbf{R}_\zeta \mathbf{z},
  \end{align}
where $\mathbf{R}_\zeta$ is derived by transforming the identity matrix with a finite number of row swapping operations and $\mathbf{z}$ is the normally vectorized form of $\mathbf{Z}$.
\end{proposition}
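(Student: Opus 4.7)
The plan is to recognize scan expanding as nothing more than a re-indexing of the entries of $\mathbf{Z}$, identify that re-indexing with a permutation of $\{1,\ldots,HW\}$, and then appeal to the fact that every permutation matrix is obtainable from the identity through finitely many row swaps. First I would fix notation: let $\mathbf{Z}\in\mathbb{R}^{H\times W}$ with entries $Z_{h,w}$, let $\mathbf{z}=vec(\mathbf{Z})$ be the standard (column-major) vectorization so that $(h,w)\mapsto \pi_0(h,w)$ is a fixed bijection onto $\{1,\ldots,HW\}$, and let the scan direction $\zeta$ be described by another bijection $\pi_\zeta$ on the same index sets that lists the entries of $\mathbf{Z}$ in the order prescribed by $\zeta$. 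The scan output is then characterized by $x^\zeta_{\pi_\zeta(h,w)} = Z_{h,w}$.

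Next I would define $\sigma_\zeta = \pi_\zeta \circ \pi_0^{-1}$, which is a permutation of $\{1,\ldots,HW\}$, and let $\mathbf{R}_\zeta$ be the associated permutation matrix with $(\mathbf{R}_\zeta)_{i,j}=1$ iff $i=\sigma_\zeta(j)$ and zero otherwise. A direct index computation gives $(\mathbf{R}_\zeta\mathbf{z})_i = z_{\sigma_\zeta^{-1}(i)}$, and since $\sigma_\zeta^{-1}(\pi_\zeta(h,w)) = \pi_0(h,w)$, this matches $Z_{h,w}=x^\zeta_{\pi_\zeta(h,w)}$; hence $\mathbf{x}^\zeta=\mathbf{R}_\zeta\mathbf{z}$, establishing the desired matrix-transformation form.

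Finally I would invoke the classical fact that the symmetric group $S_{HW}$ is generated by transpositions, so $\sigma_\zeta$ factors as a finite product $\tau_k\cdots\tau_1$; applying the corresponding row swaps to $\mathbf{I}$ one-by-one yields $\mathbf{R}_\zeta$ in at most $HW-1$ steps, via a selection-sort style argument. This completes the claim that $\mathbf{R}_\zeta$ is constructed from $\mathbf{I}$ by a finite number of row-swapping operations.

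The main obstacle is conceptual rather than computational: one must first formalize what counts as a \emph{scan direction}, so that $\pi_\zeta$ is well-defined. Once we grant that any scan pattern appearing in the Vision Mamba literature (cross-scan, zigzag, local-window, Hilbert-curve, and so on) induces a bijection between $\{1,\ldots,H\}\times\{1,\ldots,W\}$ and $\{1,\ldots,HW\}$, the remainder of the argument is elementary linear algebra and requires no delicate estimates.
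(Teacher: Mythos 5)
Your proposal is correct and follows essentially the same route as the paper, which simply observes that a scan direction is a reordering of the vectorized feature realizable by finitely many row swaps of the identity and declares the proposition "easily derived." Your version merely makes that observation rigorous by formalizing the scan as a permutation $\sigma_\zeta=\pi_\zeta\circ\pi_0^{-1}$, verifying the index identity $(\mathbf{R}_\zeta\mathbf{z})_i=z_{\sigma_\zeta^{-1}(i)}$, and invoking the generation of the symmetric group by transpositions.
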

\begin{proof}
Based on the aforementioned scan expansion operation, this proposition can be easily derived.
\end{proof}
The VSSM module processes the sequence $\mathbf{x^\zeta}$ using the SSM operator, resulting in $\mathbf{y}$, which can be formulated as:
\begin{align}\label{SSMop}
  \mathbf{y}=SSM(\mathbf{A},\mathbf{B},\mathbf{C})(\mathbf{x}^\zeta).
\end{align}

Next, $\mathbf{y}$ needs to be reordered back to its original sequence before being reshaped into the 2D output. This recovery operation can also be expressed as a matrix transformation:
\begin{align}\label{recover}
  \mathbf{y^\zeta}=\mathbf{R}_\zeta^T \mathbf{y}.
\end{align}
Finally, $\mathbf{y}^\kappa$ is reshaped as the 2D feature output $\mathbf{Y}^\kappa$.
\begin{remark}
The inverse of $\mathbf{R}_\zeta$ is used to restore the order of $\mathbf{y}$ to $\mathbf{y^\zeta}$. In (\ref{recover}), we apply $\mathbf{R}^T_\zeta$ for the recover process because the $\mathbf{R}_\zeta$ is obtained by applying a finite number of row-swapping operations to the identity matrix. Therefore, the inverse of $\mathbf{R}_\zeta$ is $\mathbf{R}_\zeta^T$.
\end{remark}
Upon reviewing the computational flow of the VSSM module, we observe that it can be regarded as a sequence mapping module with a normal vectorization module and a module normally reshaping a vector to a matrix. The core component of the VSSM module is the sequence mapping module, which determines its key characteristics. In the VSSM module, the sequence mapping module can be seen as an SSM operator combined with a matrix $\mathbf{R}_\zeta$ to reorder the input and output sequence. Furthermore, we find from (\ref{expand}) and (\ref{recover}) that the only requirement for the sequence mapping module is that the input reordering operation must be reversible. As a result, we define GSSM to provide a more universal framework for understanding and designing the VSSM module.

We generalize the definition of $\mathbf{R}_\zeta$ by deriving it not only from row-swapping transformations but also from arbitrary elementary transformations, resulting in an invertible matrix.
\begin{definition}
  The \textbf{scan exchanging} from a sequence $\mathbf{z}$ to a sequence $\mathbf{x^ \kappa }$ in the scheme $ \kappa  $ can be formulated as follows:
  \begin{align}\label{gexpand}
    \mathbf{x^\kappa}=\mathbf{R}_\kappa \mathbf{z},
  \end{align}
  where $\mathbf{R}_\kappa$ is an invertible matrix that depends on the scan scheme $\kappa$.
\end{definition}

\begin{definition}
  The \textbf{scan recovery} from a sequence $\mathbf{y}$ to a sequence $\mathbf{y^\kappa}$ in the scheme $ \kappa  $ can be formulated as follows:
  \begin{align}\label{grecover}
    \mathbf{y}^\kappa=\mathbf{R}_\kappa^{-1} \mathbf{y}.
  \end{align}
\end{definition}
With the generalized scan exchanging and scan recovery approach, we can define the GSSM module.
\begin{definition}
   \textbf{The GSSM module} maps a sequence $\mathbf{z} \in \mathbb{R}^T$ to a sequence $\mathbf{y}^\kappa \in \mathbb{R}^T$ with scheme $\kappa$ and SSM operator $SSM(\mathbf{A},\mathbf{B},\mathbf{C})$, denoted as $\mathbf{y}^\kappa=GSSM(\mathbf{A},\mathbf{B},\mathbf{C},\kappa)(\mathbf{z})$. The module can be formulated as
   \begin{align}
    &\mathbf{x^\kappa}=\mathbf{R}_\kappa \mathbf{z},\label{GSSM-SSM}\\
    &\mathbf{y}=SSM(\mathbf{A},\mathbf{B},\mathbf{C})(\mathbf{x}^\kappa),\\
    &\mathbf{y}^\kappa=\mathbf{R}_\kappa^{-1} \mathbf{y}.\label{defi5-3}
  \end{align}
  \end{definition}

Therefore, we have the following proposition for GSSM.

\begin{proposition}\label{VSSM-0}
  The mapping of the GSSM module with $\mathbf{h}_0=\mathbf{0}$ can be represented as a matrix transformation as following
  \begin{equation}
    \mathbf{y}^\kappa=GSSM(\mathbf{A},\mathbf{B},\mathbf{C},\kappa)(\mathbf{z})=\mathbf{U}^\kappa \mathbf{z}\label{GSSM},
  \end{equation}
 where we have $\mathbf{U}^\kappa=\mathbf{R}_\kappa^{-1}\mathbf{M}\mathbf{R}_\kappa$.
\end{proposition}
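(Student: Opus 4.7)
The plan is to establish the matrix-transformation form of the GSSM mapping by directly chaining the three equations that define it, invoking the already-proven matrix form of the plain SSM operator from (\ref{SSM-0}). Because each of the three stages of the GSSM module is itself a linear map on the sequence (the scan exchanging is linear in $\mathbf{z}$, the SSM operator with zero initial state is linear in its input, and the scan recovery is linear in $\mathbf{y}$), the composition is automatically a matrix transformation on $\mathbf{z}$, so the only real work is to identify the matrix factor by factor.

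First I would start from the output equation (\ref{defi5-3}), $\mathbf{y}^\kappa = \mathbf{R}_\kappa^{-1}\mathbf{y}$, and substitute the SSM step $\mathbf{y}=SSM(\mathbf{A},\mathbf{B},\mathbf{C})(\mathbf{x}^\kappa)$. Because the hypothesis $\mathbf{h}_0=\mathbf{0}$ is in force, equation (\ref{SSM-0}) applies and lets me replace the SSM operator acting on $\mathbf{x}^\kappa$ with left-multiplication by the lower-triangular matrix $\mathbf{M}$ whose $(i,j)$ entry is $\mathbf{C}_i\mathbf{A}^\times_{i:j}\mathbf{B}_j$. This gives $\mathbf{y}^\kappa = \mathbf{R}_\kappa^{-1}\mathbf{M}\mathbf{x}^\kappa$. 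Then I would substitute the scan-exchanging equation (\ref{GSSM-SSM}) to replace $\mathbf{x}^\kappa$ by $\mathbf{R}_\kappa\mathbf{z}$, yielding the chain
\begin{equation}
\mathbf{y}^\kappa = \mathbf{R}_\kappa^{-1}\mathbf{M}\mathbf{R}_\kappa\mathbf{z},
\end{equation}
so that reading off the coefficient matrix identifies $\mathbf{U}^\kappa = \mathbf{R}_\kappa^{-1}\mathbf{M}\mathbf{R}_\kappa$ as claimed.

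The one point I would be careful about is that the substitution of (\ref{SSM-0}) is valid only because $\mathbf{h}_0=\mathbf{0}$; without this assumption the SSM output would contain an additional contribution of the form $\mathbf{C}_t\mathbf{A}^\times_{t:0}\mathbf{h}_0$, which is no longer a matrix transformation of the input sequence alone. I would therefore state this hypothesis explicitly at the start of the argument and flag that invertibility of $\mathbf{R}_\kappa$, guaranteed by Definition of scan exchanging, is what makes $\mathbf{R}_\kappa^{-1}$ well defined in the last step. Beyond these two easy checks I do not expect any obstacle: the result is essentially a similarity-style change of basis applied to the sequence index, where $\mathbf{R}_\kappa$ implements the reordering dictated by scheme $\kappa$ and $\mathbf{R}_\kappa^{-1}$ undoes it around the underlying SSM mixing matrix $\mathbf{M}$.
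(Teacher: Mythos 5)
Your proof is correct and follows essentially the same route as the paper's: substitute the matrix form $\mathbf{y}=\mathbf{M}\mathbf{x}^\kappa$ from (\ref{SSM-0}) and the scan-exchanging relation $\mathbf{x}^\kappa=\mathbf{R}_\kappa\mathbf{z}$ into $\mathbf{y}^\kappa=\mathbf{R}_\kappa^{-1}\mathbf{y}$, then read off $\mathbf{U}^\kappa=\mathbf{R}_\kappa^{-1}\mathbf{M}\mathbf{R}_\kappa$. Your additional remarks on the necessity of $\mathbf{h}_0=\mathbf{0}$ and the invertibility of $\mathbf{R}_\kappa$ are sound but not a departure from the paper's argument.
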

\begin{proof}
 According to (\ref{SSM-0}), the SSM operator in the GSSM can be expressed as 
  \begin{align}
    \mathbf{y}=SSM(\mathbf{A},\mathbf{B},\mathbf{C})(\mathbf{x}^\kappa)=\mathbf{Mx}^\kappa \label{prop2-prof}.
  \end{align}
Substituting (\ref{prop2-prof}) and (\ref{GSSM-SSM}) into (\ref{defi5-3}), we obtain
  \begin{align}\label{y^kappa}
    \mathbf{y}^\kappa=\mathbf{R}_\kappa^{-1}\mathbf{y}=\mathbf{R}_\kappa^{-1}(\mathbf{Mx}^\kappa)=\mathbf{R}_\kappa^{-1}\mathbf{M}\mathbf{R}_\kappa \mathbf{z}.
  \end{align}

Let $\mathbf{U}^\kappa$ be defined as $\mathbf{R}_\kappa^{-1}\mathbf{M}\mathbf{R}_\kappa$. Thus we obtain (\ref{GSSM}).
\end{proof}

\begin{figure*}[t]
  \begin{center}
    \includegraphics[width=0.99\textwidth]{./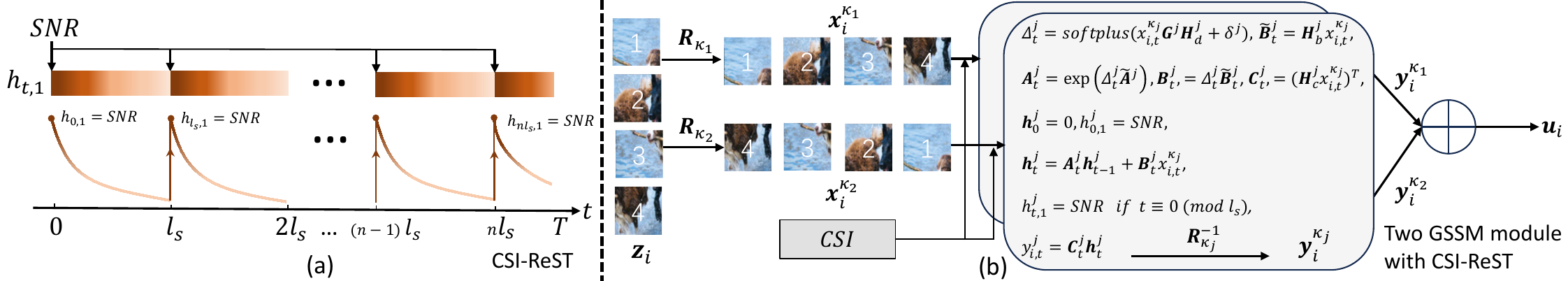}
  \end{center}
    \caption{{(a): The concept of the CSI-ReST method. (b): The computational flow of the two GSSM modules with CSI-ReST.}}
    \label{V-S6}
    \vspace{-0.3 cm}
\end{figure*}

Proposition 2 demonstrates that the scan exchanging and scan recovery operations in the GSSM module preserve the matrix transformation property. These operations merely convert the transformation matrix into its similar matrix based on the scan scheme $\kappa$. The similarity transformations affect the perceptible elements in the GSSM module. For image transmission, it is crucial to capture the global elements of the vectorization image feature $\mathbf{z}$ efficiently. To achieve this, we design the use of two GSSM modules: one applies scan scheme $\kappa_1$, where $\mathbf{R}_{\kappa_1}$ is the identity $\mathbf{I}$, and the other adopts the scan scheme $\kappa_2$, where the elements in row $i$ and column $j$ of matrix $\mathbf{R}_{\kappa_2}$ is $1$ if $i+j=T+1$. The outputs of both GSSM modules are then combined to produce the final output $\mathbf{u}$, which can be derived as follows:
\begin{align}\label{DGSSM}
  \mathbf{u}&= \mathbf{y}^ {\kappa_1}+\mathbf{y}^ {\kappa_2}=GSSM(\mathbf{A}^1,\mathbf{B}^1,\mathbf{C}^1,\kappa_1)(\mathbf{z})\nonumber\\
  &+GSSM(\mathbf{A}^2,\mathbf{B}^2,\mathbf{C}^2,\kappa_2)(\mathbf{z})
  =\mathbf{U}^{\kappa_1} \mathbf{z}+\mathbf{U}^{\kappa_2} \mathbf{z}.
\end{align}
\begin{proposition}\label{pro4}
  The design in (\ref{DGSSM}) is sufficient to capture global information. 
\end{proposition}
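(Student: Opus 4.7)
The plan is to compute the combined transformation matrix $\mathbf{U}^{\kappa_1}+\mathbf{U}^{\kappa_2}$ entrywise and show that it has full support, so that every coordinate of the output $\mathbf{u}$ depends on every coordinate of the input $\mathbf{z}$. I interpret ``capturing global information'' in the natural sense that for every pair $(i,j)$ there is a direct, generically nonzero linear dependence of $u_i$ on $z_j$, i.e.\ $(\mathbf{U}^{\kappa_1}+\mathbf{U}^{\kappa_2})_{i,j}\neq 0$.

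First, I would apply Proposition \ref{VSSM-0} to the branch with $\kappa_1$. Since $\mathbf{R}_{\kappa_1}=\mathbf{I}$ gives $\mathbf{U}^{\kappa_1}=\mathbf{M}$, and by the definition of $\mathbf{A}^\times_{i:j}$ in (\ref{Atimes}), $\mathbf{M}$ is lower triangular with $\mathbf{M}_{i,j}=\mathbf{C}^1_i\mathbf{A}^{1,\times}_{i:j}\mathbf{B}^1_j$ for $i\ge j$ and $\mathbf{M}_{i,j}=0$ for $i<j$. Hence this branch already links $u_i$ to all $z_j$ with $j\le i$ (its ``past''), but is blind to the ``future'' indices $j>i$.

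Next, I would analyse $\mathbf{R}_{\kappa_2}$. Because its $(i,j)$ entry is $1$ iff $i+j=T+1$, it is the exchange (anti-diagonal) matrix, which is symmetric and an involution, so $\mathbf{R}_{\kappa_2}^{-1}=\mathbf{R}_{\kappa_2}$. A direct index computation then gives
\begin{align}
(\mathbf{U}^{\kappa_2})_{i,j}=(\mathbf{R}_{\kappa_2}\mathbf{M}'\mathbf{R}_{\kappa_2})_{i,j}=\mathbf{M}'_{T+1-i,\,T+1-j},
\end{align}
where $\mathbf{M}'$ is the matrix built from $(\mathbf{A}^2,\mathbf{B}^2,\mathbf{C}^2)$. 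This is nonzero exactly when $T+1-i\ge T+1-j$, i.e.\ when $i\le j$, so $\mathbf{U}^{\kappa_2}$ is upper triangular and covers precisely the pairs $(i,j)$ with $j\ge i$, closed-form
\begin{align}
(\mathbf{U}^{\kappa_2})_{i,j}=\mathbf{C}^2_{T+1-i}\,\mathbf{A}^{2,\times}_{T+1-i:\,T+1-j}\,\mathbf{B}^2_{T+1-j},\qquad i\le j.
\end{align}

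Finally I would add the two triangular patterns: for $i>j$ only $\mathbf{U}^{\kappa_1}$ contributes; for $i<j$ only $\mathbf{U}^{\kappa_2}$ contributes; and on the diagonal both contribute. Under the generic non-degeneracy assumption that the data-dependent factors $\mathbf{A}_t,\mathbf{B}_t,\mathbf{C}_t$ do not vanish (which holds for the parameterisation used by Mamba/VSSM, where e.g.\ $\mathbf{A}_t$ is obtained from a stable selective SSM and $\mathbf{B}_t,\mathbf{C}_t$ are generic projections), every entry of $\mathbf{U}^{\kappa_1}+\mathbf{U}^{\kappa_2}$ is nonzero. This exactly means each output token aggregates contributions from every input token, which I take as the definition of capturing global information, completing the argument. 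The only real subtlety, and what I expect to be the main obstacle, is pinning down the ``global information'' criterion precisely and justifying the generic-nonvanishing condition so that the triangular-plus-triangular decomposition actually yields a dense matrix rather than one with accidental zeros.
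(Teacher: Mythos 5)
Your proposal is correct and follows essentially the same route as the paper: both compute $\mathbf{U}^{\kappa_1}$ and $\mathbf{U}^{\kappa_2}$ entrywise via Proposition \ref{VSSM-0}, observe that the first is supported on $j\le i$ and the second (after conjugation by the anti-diagonal exchange matrix) on $j\ge i$, and conclude that their sum lets every output element depend on every input element through learnable, generically nonzero coefficients. Your explicit remarks on $\mathbf{R}_{\kappa_2}$ being an involution and on the genericity needed for non-vanishing are slightly more careful than the paper's wording but do not change the argument.
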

\begin{proof}
  We can write the matrices $\mathbf{U}^{\kappa_1}$ and $\mathbf{U}^{\kappa_2}$ in element wise. The elements in row $i$ and column $j$ of $\mathbf{U}^{\kappa_1}$ is
  \begin{align}
    U^{\kappa_1}_{ij}=\mathbf{C}^1_i\mathbf{A}^{1 \times }_{i:j}\mathbf{B}^1_j,
  \end{align}
and according to the definition of $\mathbf{R}_{\kappa_2}$, $\mathbf{U}^{\kappa_2}$ is obtained by reversing the arrangement in both row and column directions. The element in $\mathbf{U}^{\kappa_2}$ can be derived as
\begin{align}
U^{\kappa_2}_{ij}=\mathbf{C}^2_{T+1-i} \mathbf{A}^{2 \times }_{(T+1-i):(T+1-j)}\mathbf{B}^2_{T+1-j}.
\end{align}
According to (\ref{DGSSM}), the $t$-th element of $\mathbf{u}$ is
\begin{align}\label{global}
  u_t=&\sum_{s = 1}^{t}\mathbf{C}^1_t \mathbf{A}_{t:s}^{1\times} \mathbf{B}^1_s z_s+\nonumber\\
  &\sum_{s = t}^{T}\mathbf{C}^2_{T+1-t} \mathbf{A}_{(T+1-t):(T+1-s)}^{2\times} \mathbf{B}^2_{T+1-s} z_s.
\end{align}
According to (\ref{Atimes}), $\mathbf{A}_{t:s}^{1\times}$ consists of learnable non-zero parameters when $s$ is in the summarization ranging from $1$ to $t$. Similarly, $\mathbf{A}_{(T+1-t):(T+1-s)}^{2\times}$ contains learnable non-zero parameters when $s$ is in the summarization ranging from $t$ to $T$. Therefore, for any element $\mathbf{u}_t$ in the output $\mathbf{u}$, it can combine all input elements with learnable parameters, demonstrating that the designed module, expressed as (\ref{DGSSM}), is capable of capturing global information.
\end{proof}

\subsection{CSI as Residual States for Channel Adaptation}
To design the customized CSI-ReST method, we further explore the structural characteristics of the GSSM model and uncover other exploitable properties. Tracing back to its origins, the SSM model is built upon the continuous-time system state equation, as shown in (\ref{LTI}). Mathematically, the solution to (\ref{LTI}) can be expressed as follows:
\begin{align}
\mathbf{h}(t)=e^{\mathbf{\tilde{A}}(t-t_0)}\mathbf{h}(t_0)+\int_{t_0}^{t}e^{\mathbf{\tilde{A}}(t-\tau)}\mathbf{\tilde{B}}x(\tau)d\tau,\\
y(t)=\underbrace{\mathbf{\tilde{C}}e^{\mathbf{\tilde{A}}(t-t_0)}\mathbf{h}(t_0)}_{zero-input\ response}+\underbrace{\int_{t_0}^{t}\mathbf{\tilde{C}}e^{\mathbf{\tilde{A}}(t-\tau)}\mathbf{\tilde{B}}x(\tau)d\tau}_{Zero-state\ response}.
\end{align}

It can be observed that the output consists of both the zero-state response and the zero-input response. In conventional SSM models, only the zero-state response contributes to the output, even though both responses are calculated, leading to the zero-input response being underutilized. By analyzing the impact of the zero-input response on the GSSM model, we can explore how to leverage this otherwise wasted intrinsic feature for integrating CSI.
\begin{proposition}\label{pro5}
  The mapping of the GSSM module with any zero state $\mathbf{h}_0$ can be represented as two independent matrix transformations applied to the input and the zero state as follows:
  \begin{align}\label{CSI-GSSM}
    \mathbf{y}^\kappa=GSSM(\mathbf{A},\mathbf{B},\mathbf{C},\kappa)(\mathbf{z},\mathbf{h}_0)=\mathbf{U}^\kappa \mathbf{z}+\mathbf{V}^{\kappa} \mathbf{h}_0,
  \end{align}
  where $\mathbf{V}^{\kappa}=\mathbf{R}_{\kappa}^{-1}diag(\mathbf{C}_1\mathbf{A}^{\times}_{1:0},\mathbf{C}_2\mathbf{A}^{\times}_{2:0},...,\mathbf{C}_T\mathbf{A}^{\times}_{T:0})$.
\end{proposition}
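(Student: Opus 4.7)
The plan is to exploit linearity of the SSM recursion in (\ref{SSM}) to split the output into a zero-state response, for which Proposition~2 already applies, and a zero-input response that is a linear function of $\mathbf{h}_0$; then push both contributions through the scan-recovery step $\mathbf{R}_\kappa^{-1}$ at the end of the GSSM pipeline.

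First I would unroll the recursion $\mathbf{h}_t=\mathbf{A}_t\mathbf{h}_{t-1}+\mathbf{B}_t x_t$ with an arbitrary initial value $\mathbf{h}_0$. A short induction on $t$ using the $\mathbf{A}^\times$ notation from (\ref{Atimes}), extended to the index $0$ in the natural way so that $\mathbf{A}^\times_{t:0}=\mathbf{A}_t\mathbf{A}_{t-1}\cdots\mathbf{A}_1$, yields
\begin{equation}
\mathbf{h}_t \;=\; \mathbf{A}^\times_{t:0}\mathbf{h}_0 \;+\; \sum_{s=1}^{t}\mathbf{A}^\times_{t:s}\mathbf{B}_s x_s .
\end{equation}
Premultiplying by $\mathbf{C}_t$ produces the classical discrete-time decomposition
\begin{equation}
y_t \;=\; \underbrace{\mathbf{C}_t\mathbf{A}^\times_{t:0}\mathbf{h}_0}_{\text{zero-input part}} \;+\; \underbrace{\sum_{s=1}^{t}\mathbf{C}_t\mathbf{A}^\times_{t:s}\mathbf{B}_s x_s}_{\text{zero-state part}},
\end{equation}
which is the exact discrete analogue of the continuous-time split displayed immediately before the proposition.

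Next I would push each part through the GSSM pipeline. Since the scan-exchange step (\ref{GSSM-SSM}) only permutes/mixes the input sequence $\mathbf{z}$ and leaves $\mathbf{h}_0$ untouched, the zero-state part inherits exactly the matrix $\mathbf{M}$ of Proposition~2 acting on $\mathbf{x}^\kappa=\mathbf{R}_\kappa\mathbf{z}$; applying $\mathbf{R}_\kappa^{-1}$ as in (\ref{defi5-3}) then reproduces the $\mathbf{U}^\kappa\mathbf{z}$ term of Proposition~2 without any new work. For the zero-input part, I would stack the scalars $\mathbf{C}_t\mathbf{A}^\times_{t:0}\mathbf{h}_0$ into a length-$T$ vector and rewrite it as a single matrix transformation acting on $\mathbf{h}_0$; the resulting matrix is precisely the block object $\mathrm{diag}(\mathbf{C}_1\mathbf{A}^\times_{1:0},\ldots,\mathbf{C}_T\mathbf{A}^\times_{T:0})$ written in the statement. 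Premultiplying by $\mathbf{R}_\kappa^{-1}$ then gives $\mathbf{V}^\kappa\mathbf{h}_0$, and adding the two contributions yields (\ref{CSI-GSSM}).

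The main obstacle I foresee is bookkeeping rather than mathematics: one has to reconcile the dimensional convention of the $\mathrm{diag}(\cdot)$ notation with the shape assumed for $\mathbf{h}_0$, since each block $\mathbf{C}_t\mathbf{A}^\times_{t:0}$ is a $1\times N$ row and so the ``block-diagonal'' matrix naturally pairs either with a single shared $N$-vector (read as a vertical stack of row-blocks) or with a $TN$-vector of per-step initial states (read as a genuine block diagonal). Both readings are compatible with the sketch above; once the convention is fixed, the argument is just linearity of matrix multiplication together with the already-established Proposition~2, and no further computation is needed.
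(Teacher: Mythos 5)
Your proposal is correct and follows essentially the same route as the paper: unroll the recursion with arbitrary $\mathbf{h}_0$ to split $y_t$ into the zero-input term $\mathbf{C}_t\mathbf{A}^{\times}_{t:0}\mathbf{h}_0$ and the zero-state sum, vectorize over $t$, and premultiply by $\mathbf{R}_\kappa^{-1}$ to obtain $\mathbf{U}^\kappa\mathbf{z}+\mathbf{V}^\kappa\mathbf{h}_0$. Your remark on the dimensional reading of the $\mathrm{diag}(\cdot)$ notation is a fair observation about the statement itself, but it does not affect the validity of the argument.
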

\begin{proof}
  According to (\ref{GSSM}), we expand the SSM operator in the GSSM module from its recurrent form to a matrix operations form. The hidden state $\mathbf{h}_t$ with input $\mathbf{x}^\kappa$ for all $t$ can be expressed as:
  \begin{align}
    \mathbf{h}_1&=\mathbf{A}_1\mathbf{h}_0+\mathbf{B}_1x^\kappa_1\nonumber,\\
    \mathbf{h}_2&=\mathbf{A}_2\mathbf{h}_1+\mathbf{B}_2x^\kappa_2=\mathbf{A}_2\mathbf{A}_1\mathbf{h}_0+\mathbf{A}_2\mathbf{B}_1x^\kappa_1+\mathbf{B}_2x^\kappa_2\nonumber,\\
    &\vdots\nonumber\\
    \mathbf{h}_{T}&=\mathbf{A}^\times_{T:0}\mathbf{h}_0+\sum_{s=1}^{T}\mathbf{A}^\times_{T:s}\mathbf{B}_s x^\kappa_s.
    \end{align}
    Therefore, ${y}_t$ can be written as:
  \begin{align}\label{CSISSM}
    y_t=\mathbf{C}_t\mathbf{h}_t=\mathbf{C}_t\mathbf{A}^\times_{t:0}\mathbf{h}_0+\sum_{s=1}^{t}\mathbf{C}_t\mathbf{A}^\times_{t:s}\mathbf{B}_s x^\kappa_s.
  \end{align}
  Vectorizing equation (\ref{CSISSM}) over $t \in [1,T]$, the sum of accumulations can be expressed in matrix multiplication form as the method deriving $\mathbf{M}$. The first term can be rewritten in block matrices multiplication form by expressed it as matrix multiplication with matrix $diag(\mathbf{C}_1\mathbf{A}^{\times}_{1:0},\mathbf{C}_2\mathbf{A}^{\times}_{2:0},...,\mathbf{C}_T\mathbf{A}^{\times}_{T:0})$. Thus combining (\ref{y^kappa}), we derive the expression in (\ref{CSI-GSSM}).
\end{proof}
From Proposition \ref{pro5}, we find that \textbf{the GSSM module responds to the zero state $\mathbf{h}_0$}. This insight inspired us to propose CSI-ReST, a novel channel adaptation method that leverages the inherent (i.e., endogenous intelligence) adaptive capability of the GSSM, as shown in Fig. \ref{V-S6} (a). We first inject CSI into the first element of the initial hidden state of the GSSM modules by setting $h_{0,1}=SNR$ to leverage the inherent response and adjust the output sequence based on CSI for channel adaption. 

\begin{remark}
\textbf{CSI forgetting:} Equ. (\ref{CSISSM}) also implies that the response involves the matrix $\mathbf{A}_{1:T}$, which functions as a forgetting gate (e.g. with typical values ranging from $0.95$ to $0.97$), indicating $\mathbf{A}^\times_{t:0}$ becomes smaller as $t$ increases (e.g. $6.37\times 10^{-11}$ at $t=500$). This suggests that the influence of CSI on $y_t$ may diminish as $t$ increases. This characteristic makes it difficult to effectively adjust the output based solely on the initial state.
\end{remark}

\addtolength{\topmargin}{0.05in}
\begin{algorithm}[t]
  \hspace*{0.02in} {\bf \small{Input:}}
	\small{Input sequence $\mathbf{z}_i$, The SNR of the channel $SNR$;} \\
	\hspace*{0.02in} {\bf \small{Output:}}
	\small{Output sequence $\mathbf{u}_i$;} 
	\caption{The algorithm of the two GSSM modules with CSI-ReST.}
	\label{V-S6algorithm}
	\begin{algorithmic}[1]
    \STATE $\mathbf{x}^{\kappa_1}_i=\mathbf{R}_{\kappa_1}\mathbf{z}_i,\ \mathbf{x}^{\kappa_2}_i=\mathbf{R}_{\kappa_2}\mathbf{z}_i$;
    \FOR {$j=1$ to 2}
    \STATE ${\Delta}^j_t=softplus({x}_{i,t}^{\kappa_j}\mathbf{G}^j\bm{H}^j_d+{\delta}^j),~ \mathbf{\tilde{B}}^j_t=\mathbf{H}_b^j {x}_{i,t}^{\kappa_j};$
    \STATE $\mathbf{{A}}^j_t=\exp({\Delta}^j_t \mathbf{\tilde{A}}^j),~\mathbf{{B}}^j_t={\Delta}^j_t\mathbf{\tilde{B}}^j_t,~ \mathbf{C}^j_t=(\mathbf{H}_c^j {x}_{i,t}^{\kappa_j})^T$;
    \STATE $\mathbf{h}_0^j=\mathbf{0}$, $h_{0,1}^j=SNR$;
    \FOR {$t=1$ to $T$}
    \STATE $\mathbf{h}_t^j=\mathbf{A}_t^j\mathbf{h}_{t-1}^j+\mathbf{B}_t^j{x}_{i,t}^{\kappa_j}$;
    \IF {$t\equiv 0~(mod ~ l_s)$}
    \STATE $h_{t,1}^j=SNR$;
    \ENDIF
    \STATE ${y}_{i,t}^{j}=\mathbf{C}_t^j\mathbf{h}_t^j$;
    \ENDFOR
    \ENDFOR
    \STATE $\mathbf{u}_i=\mathbf{y}_{i}^{\kappa_1}+\mathbf{y}_{i}^{\kappa_1}=\mathbf{R}_{\kappa_1}^{-1}\mathbf{y}_{i}^{1}+\mathbf{R}_{\kappa_2}^{-1}\mathbf{y}_{i}^{2}$.
	\end{algorithmic}
  
\end{algorithm}

To address the CSI forgetting problem, inspired by ResNet, the proposed CSI-ReST method samples certain residual states from $\{\mathbf{h}_0,\mathbf{h}_1,...,\mathbf{h}_T\}$ at intervals of $l_s$ and sets their first element as $SNR$. As shown in Fig. \ref{V-S6} (a), in addition to inserting the SNR into the initial state $h_{0,1}$, CSI-ReST also inserts the SNR into \textbf{residual states} such as $h_{l_s,1}$, $h_{2l_s,1}$, and so on. The operation of the CSI-ReST method is
\begin{align}\label{CSI-ReST}
  h_{t,1}=SNR, ~~ for ~~ t=0,1,...,T,~~ t\equiv 0~(mod ~ l_s). 
\end{align}
Here, we use the average SNR as $SNR$ for Rayleigh fading channels. As a result, most states retain the knowledge of CSI, and those residual states that forget CSI can be \textbf{refreshed}, ensuring that the output sequence is sufficiently adjusted according to CSI for effective channel adaptation. Meanwhile, the high-dimensional hidden states of GSSM allow it to store extensive knowledge from the past sequence. Therefore, using one dimension of the residual states to store CSI has little impact on the model's ability to process sequences but is crucial for refreshing CSI knowledge. In summary, the CSI-ReST method selects specific residual states from the GSSM and injects CSI into them to facilitate channel adaptation. 

\begin{remark}
CSI-ReST utilizes the recurrent structure and high-dimension state characteristics of GSSM. The recurrent form allows the SNR to be embedded in $\mathbf{h}_0$, ensuring that the output sequence is adjusted based on CSI as described in (\ref{CSI-GSSM}). The high-dimension state space facilitates the refreshment of SNR to counteract forgetting, ensuring that all elements are sufficiently adjusted based on CSI.
\end{remark}

Benefitting from the characteristics of GSSMs, CSI-ReST not only achieves excellent channel adaptation but also operates with zero parameters and zero computational overhead. CSI-ReST consists solely of simple assignment operations, utilizing the existing computation and parameters of the GSSM modules. This process does not require any additional computation and additional parameters. Therefore, CSI-ReST achieves channel adaptation with zero parameter and computational overhead.
\subsection{Overall Design of the VSSM-CA module}
By integrating the proposed GSSM from (\ref{GSSM}), the global information design from (\ref{DGSSM}), and the CSI-ReST channel adaptation method from (\ref{CSI-ReST}), we develop the VSSM-CA module as the backbone module for MambaJSCC.

The overall architecture of the VSSM-CA module is shown in Fig.\ref{MambaJSCC}(b). The core of the module consists of two GSSM modules with CSI-ReST. The VSSM-CA module first normalizes the input patches using layer normalization, then processes them through two branches. In the first branch, depicted in Fig.\ref{MambaJSCC}(b), an FC layer, a convolution layer, and a SiLU activation function process the patches into higher-dimensional patches. These patches are then vectorized into 1D sequences, with the $i$-th sequence denoted as $\mathbf{z}_i \in \mathbb{R}^{p_zw_z}, i=1,2,...,c_z$, where $c_z$ is the number of sequences. Each of the $c_z$ sequences is processed by two GSSM modules with CSI-ReST, and the two GSSM modules for each sequence have independent parameters. The process of the two GSSM modules with CSI-ReST is described in Fig. \ref{V-S6}(b). Following the design of the two GSSM modules in (\ref{DGSSM}) and the CSI-ReST in (\ref{CSI-ReST}), the input sequence $\mathbf{z}_i$ undergoes scan exchange with schemes $\kappa_1$ and $\kappa_2$, transforming into $\mathbf{x}^{\kappa_1}_i$ and $\mathbf{x}^{\kappa_2}_i$, as defined in (\ref{gexpand}), which are given by
\begin{align}
  \mathbf{x}^{\kappa_1}_i=\mathbf{R}_{\kappa_1}\mathbf{z}_i,\ \mathbf{x}^{\kappa_2}_i=\mathbf{R}_{\kappa_2}\mathbf{z}_i.
\end{align}
\begin{figure*}[ht]
  \centering
  \includegraphics[width=0.95\textwidth]{./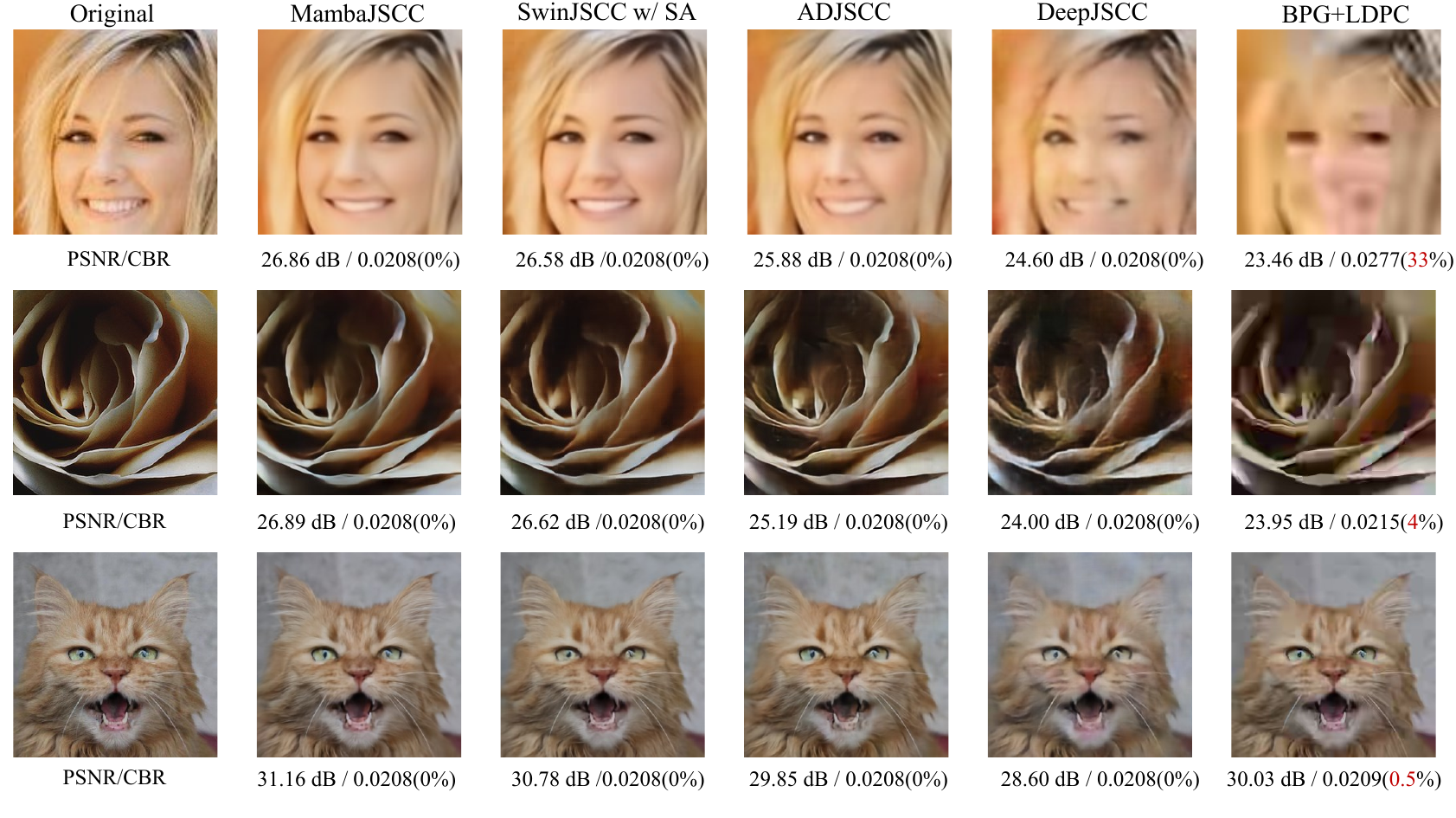}
  \caption{Examples of visual comparison between images reconstructed by different schemes at CBR=$\frac{1}{48}$. The three rows are the images coming from CelebA, DIV2K, AFHQ datasets. The red numbers indicate the percentage of channel use increased compared to MambaJSCC.}
  \label{visual}
  \vspace{-0.3 cm}
  \end{figure*}
The module then iteratively computes the output for each element through SSM and CSI-ReST. For the $t$-th element of $\mathbf{x}_i^{\kappa_j},j=1,2$, $\mathbf{h}^j_{0}=\mathbf{0}$, $h^j_{0,1}=SNR$ and the output is:
\begin{align}
  &\mathbf{h}_t^j=\mathbf{{A}}_t^j\mathbf{h}_{t-1}^j+\mathbf{{B}}_t^j{x}_{i,t}^{\kappa_j},\nonumber\\
  &h_{t,1}^j=SNR, ~~ if ~ t\equiv 0~(mod ~ l_s), \\
  &{y}_{i,t}^{j}=\mathbf{C}_t^j\mathbf{h}_t^j\nonumber.
\end{align}

After the iterations, $\mathbf{y}_{i}^{1}$ $\mathbf{y}_{i}^{2}$ undergo scan recovery and are combined to form the final output of the two GSSM modules with CSI-ReST:
\begin{align}
  \mathbf{u}_i=\mathbf{y}_{i}^{\kappa_1}+\mathbf{y}_{i}^{\kappa_1}=\mathbf{R}_{\kappa_1}^{-1}\mathbf{y}_{i}^{1}+\mathbf{R}_{\kappa_2}^{-1}\mathbf{y}_{i}^{2}.
\end{align}

In the two GSSM modules with CSI-ReST, $\mathbf{R}_{\kappa_1}$ and $\mathbf{R}_{\kappa_2}$ are determined by the scan schemes $\kappa_1$ and $\kappa_2$. The parameters $\mathbf{{A}}^j,\mathbf{{B}}^j,\mathbf{{C}}^j$ of the $j$-th GSSM module are derived from six learnable matrices: $\mathbf{G}^j\in\mathbb{R}^{1 \times O}$, $\mathbf{H}^j_d\in \mathbb{R}^{O\times 1}$, ${\delta}^j\in\mathbb{R}^{1}$, $\mathbf{H}_b^j$, $\mathbf{H}_c^j\in\mathbb{R}^{N\times 1}$, and $\mathbf{\tilde{A}}^j \in \mathbb{R}^{N\times N}$, as follows:
\begin{align}
  &{\Delta}^j_t=softplus({x}_{i,t}^{\kappa_j}\mathbf{G}^j\bm{H}^j_d+{\delta}^j),~ \mathbf{\tilde{B}}^j_t=\mathbf{H}_b^j {x}_{i,t}^{\kappa_j},\nonumber\\
  &\mathbf{{A}}^j_t=\exp({\Delta}^j_t \mathbf{\tilde{A}}^j),~\mathbf{{B}}^j_t={\Delta}^j_t\mathbf{\tilde{B}}^j_t,~ \mathbf{C}^j_t=(\mathbf{H}_c^j {x}_{i,t}^{\kappa_j})^T,
\end{align}
where $softplus(\cdot)$ is an activation function and $O$ is their parameter generation space dimension. Therefore, the two GSSM modules with CSI-ReST independently output $c_z$ sequences. The algorithm for the two GSSM modules with CSI-ReST is summarized in Algorithm \ref{V-S6algorithm}.

The $c_z$ sequences are combined with the input using a residual connection, with the diagonal matrix $\mathbf{D}_1+\mathbf{D}_2$ applied to the input, and then reshaped as the output of the first branch. In parallel, the normalized input patches are processed by an FC layer and a SiLU activation function in the second branch. The output of the second branch is then element-wise multiplied by the output of the first branch, and the result is projected back to the original dimension using another FC layer. A residual connection adds this result to the original input. Since the two GSSM modules process the $c_z$ sequences independently, the relationship between these sequences is captured. To address this, a channel-wise multi-layer perceptron (MLP) with a residual connection is applied to the final output to merge the different sequences, producing the final output of the VSSM-CA module.
\section{EXPERIMENTAL RESULTS}\label{IV}

\subsection{Experiment Setup}
\begin{figure*}[htbp]
  \centering
  \subfigure[]{\label{CelebAa}\includegraphics[width=0.325\textwidth]{./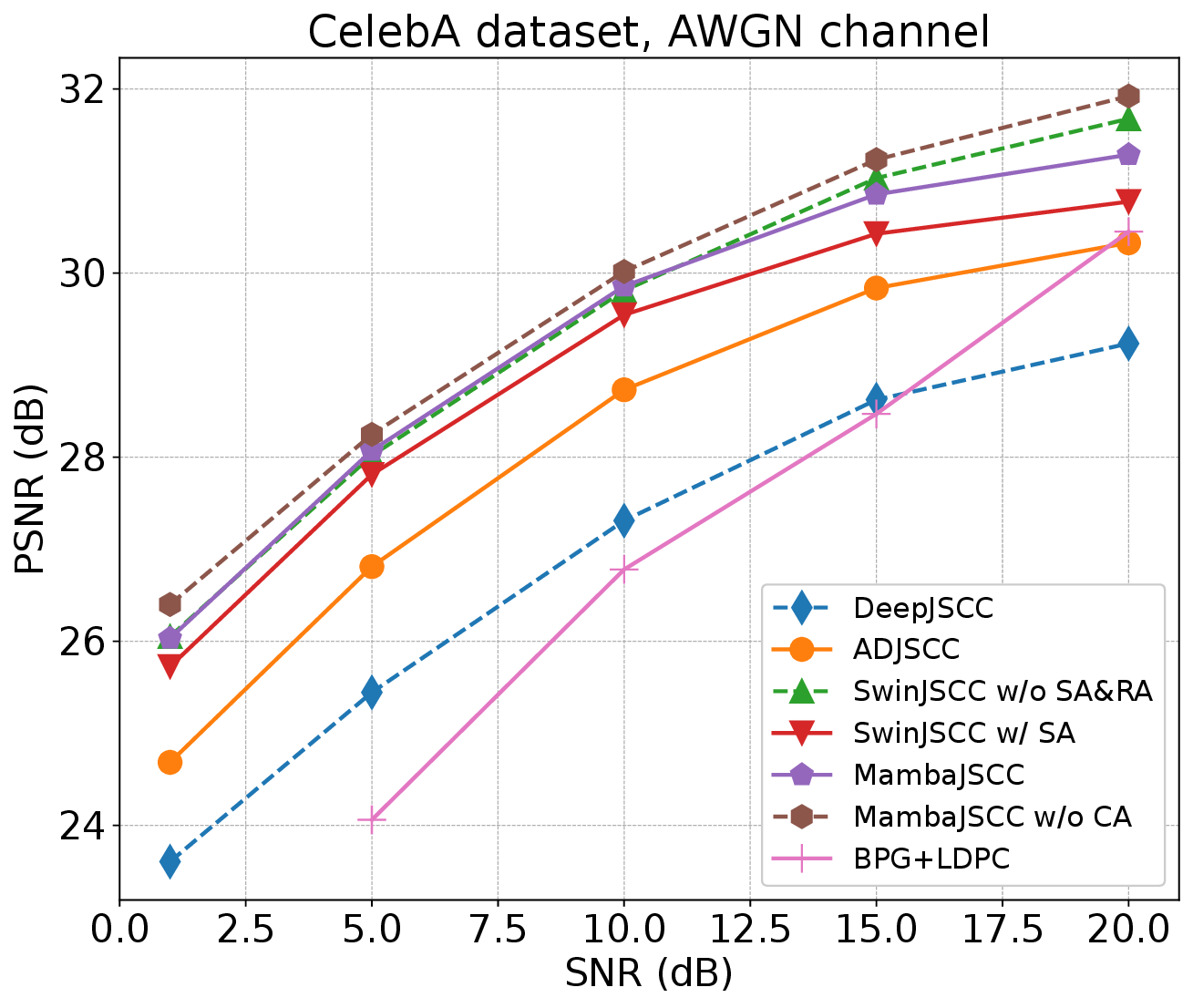}}
  \subfigure[]{\label{CelebAb}\includegraphics[width=0.325\textwidth]{./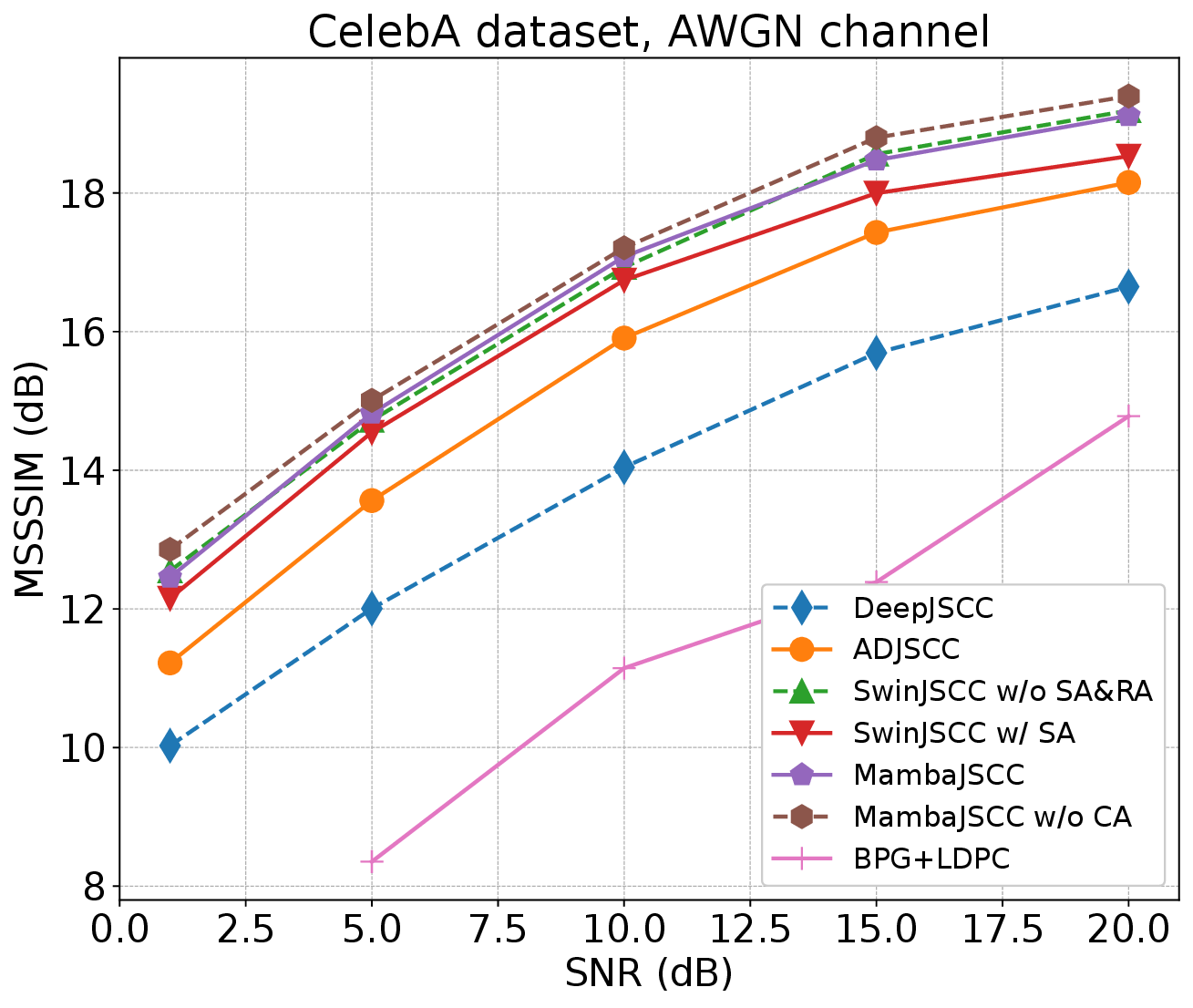}}
  \subfigure[]{\label{CelebAc}\includegraphics[width=0.334\textwidth]{./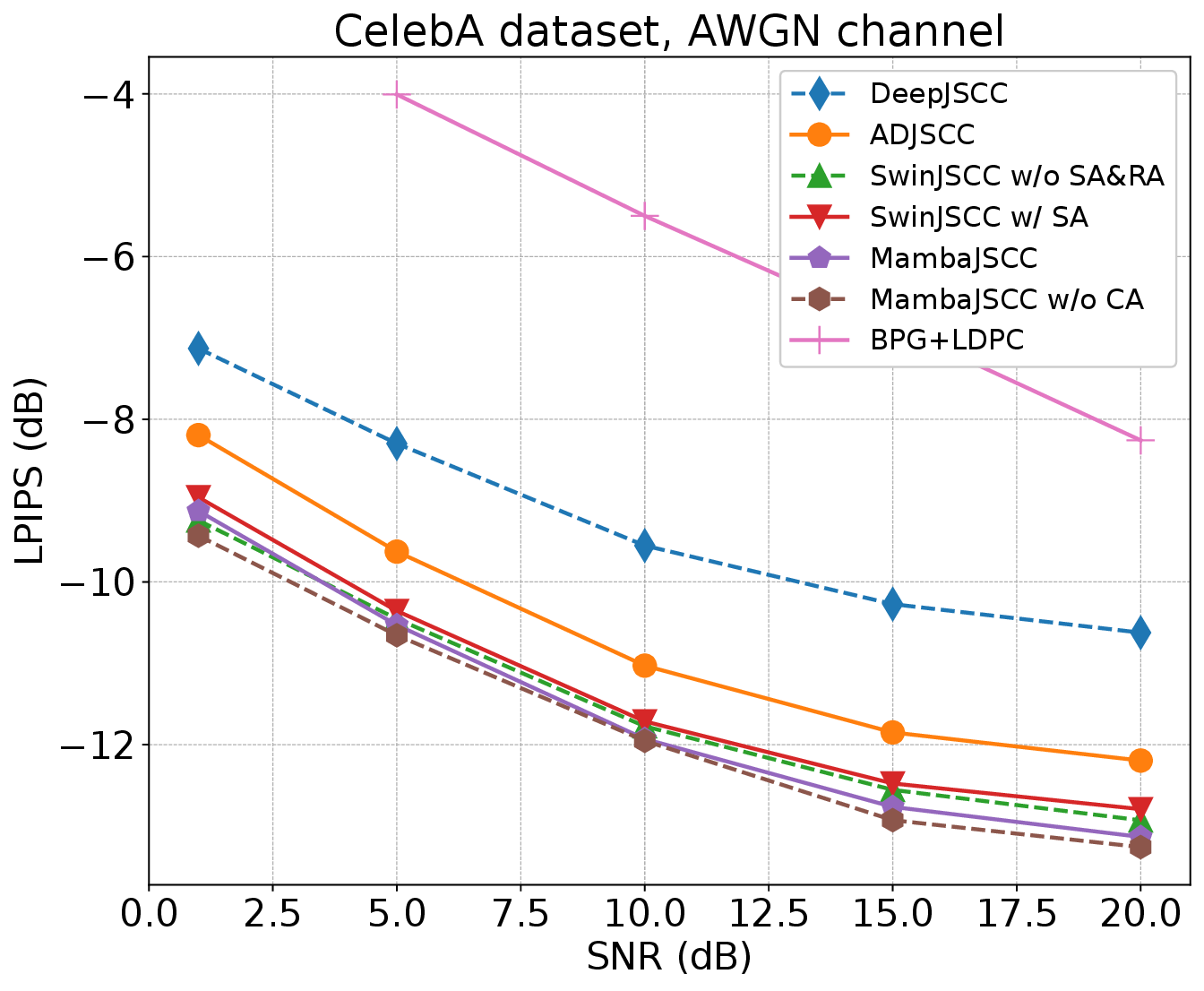}}
  \subfigure[]{\label{CelebAd}\includegraphics[width=0.325\textwidth]{./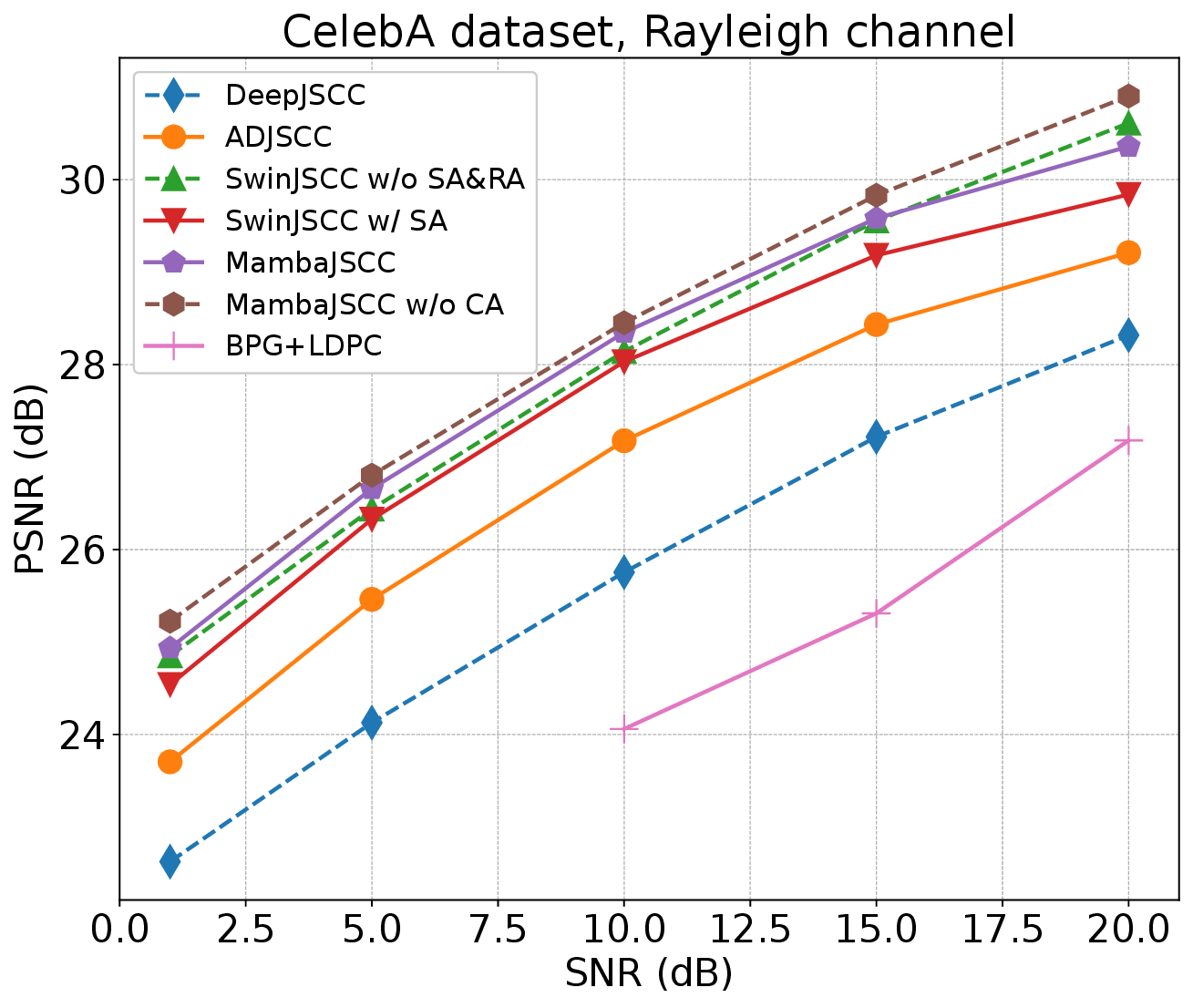}}
  \subfigure[]{\label{CelebAe}\includegraphics[width=0.325\textwidth]{./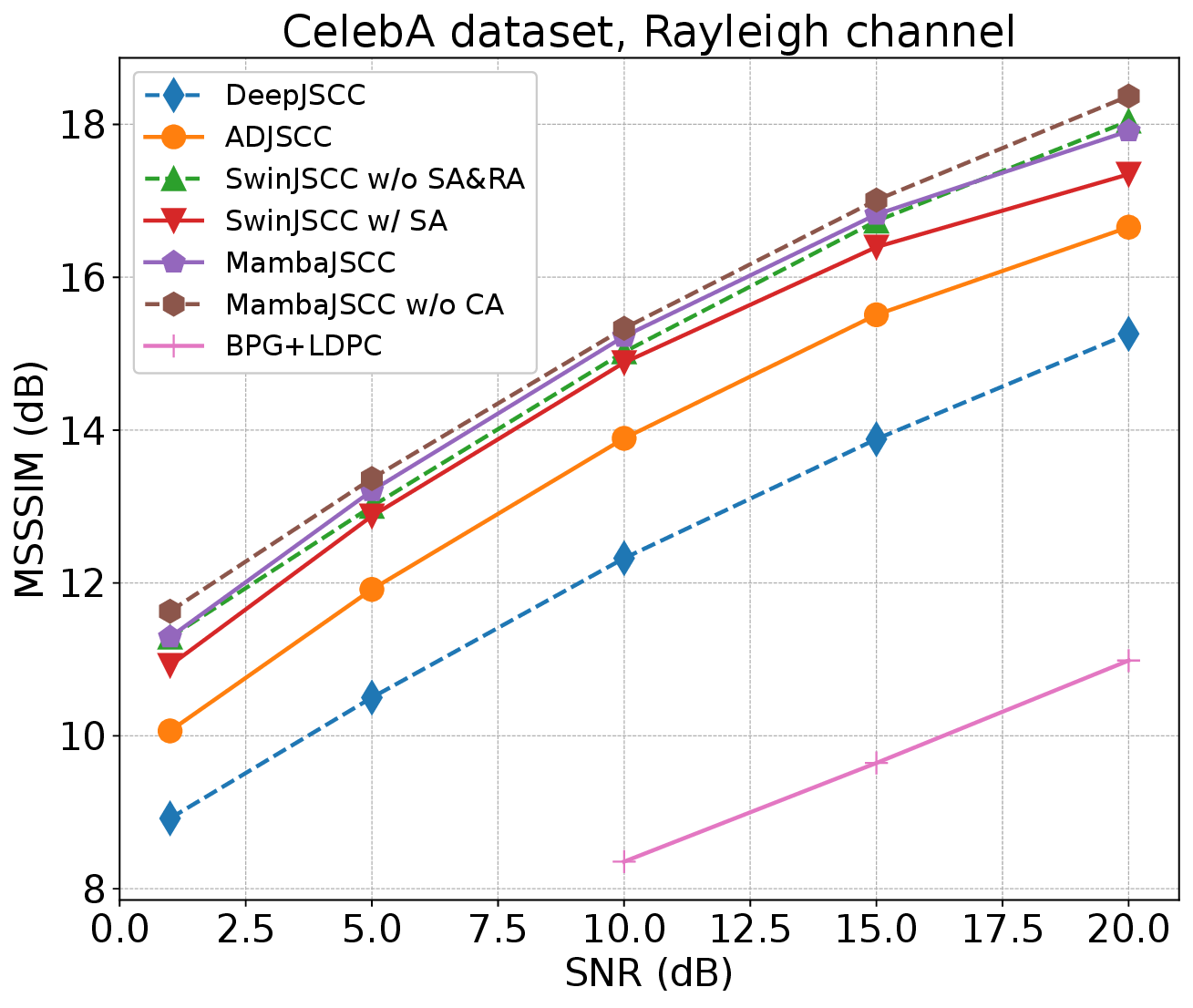}}
  \subfigure[]{\label{CelebAf}\includegraphics[width=0.334\textwidth]{./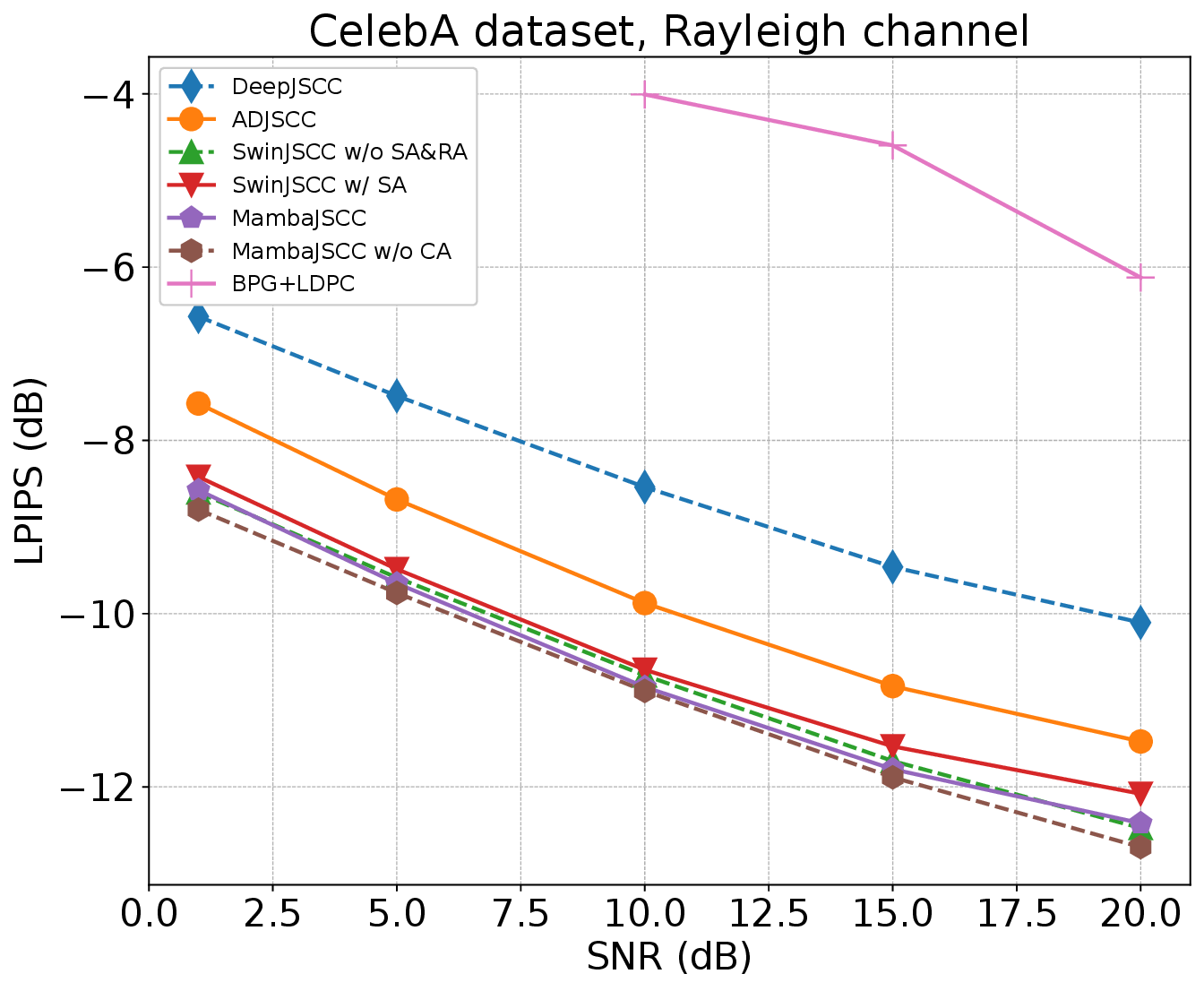}}
  \caption{(a)$\sim$(c) PSNR, MSSSIM and LPIPS performance of different models versus the SNR under the AWGN channel of CelebA dataset. (e)$\sim$(f) PSNR, MSSSIM and LPIPS performance versus the SNR under the Rayleigh fading channel of CelebA dataset. The CBR is $\frac{1}{48}$.}
  \label{CelebA}
  \vspace{-0.3 cm}
  \end{figure*}

  \begin{figure*}[htbp]
    \centering
    \subfigure[]{\label{DIV2Ka}\includegraphics[width=0.325\textwidth]{./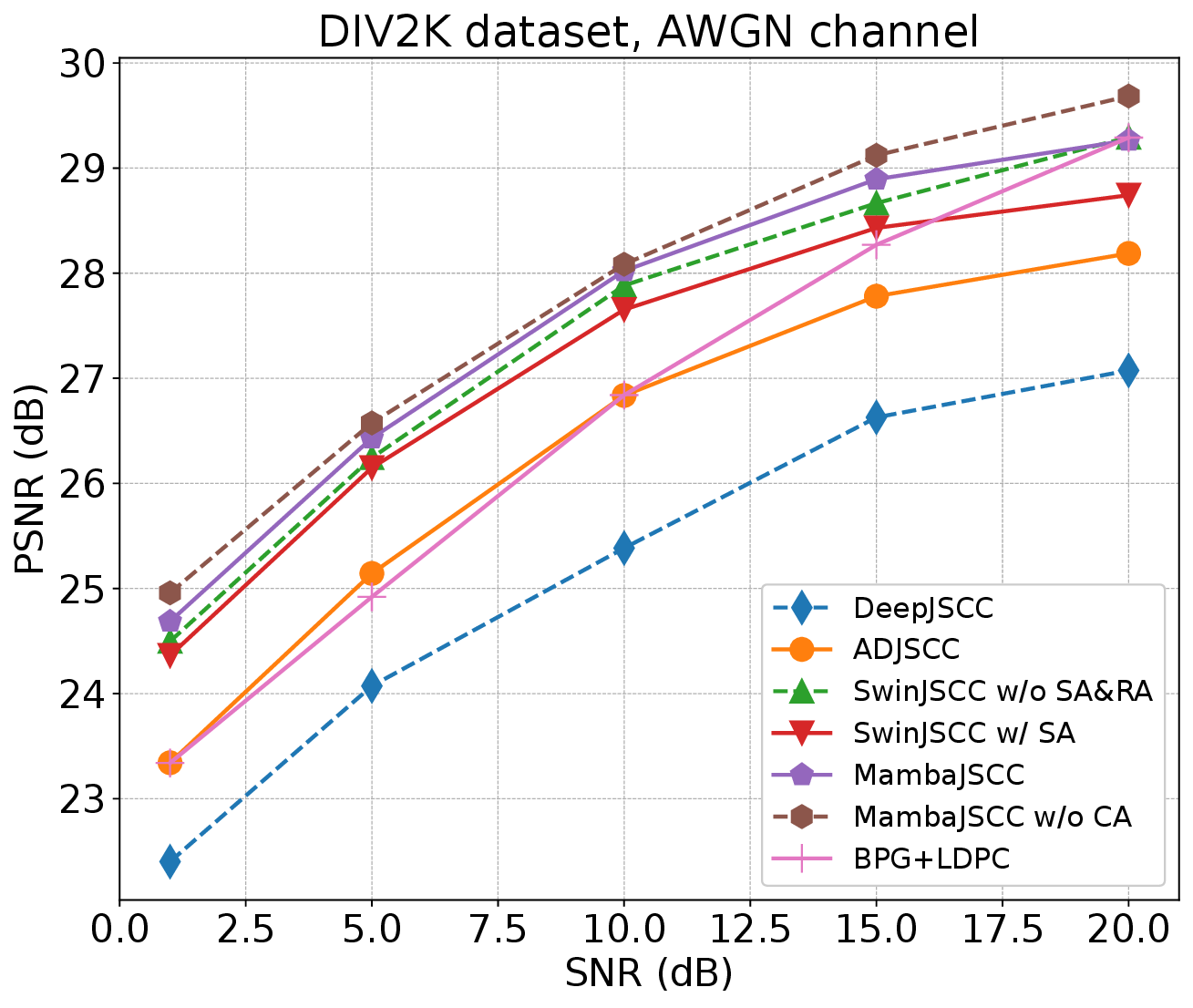}}
    \subfigure[]{\label{DIV2Kb}\includegraphics[width=0.325\textwidth]{./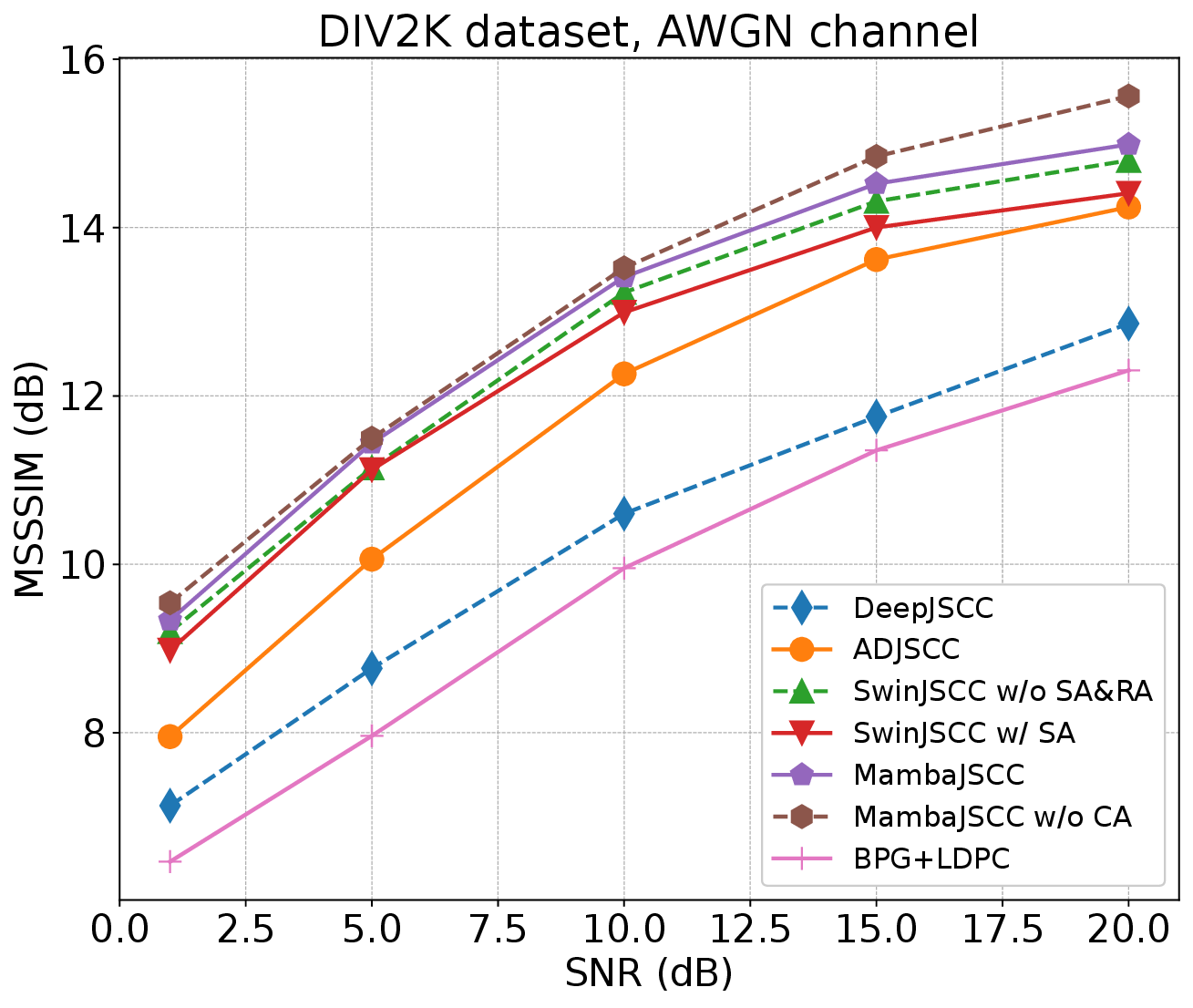}}
    \subfigure[]{\label{DIV2Kc}\includegraphics[width=0.334\textwidth]{./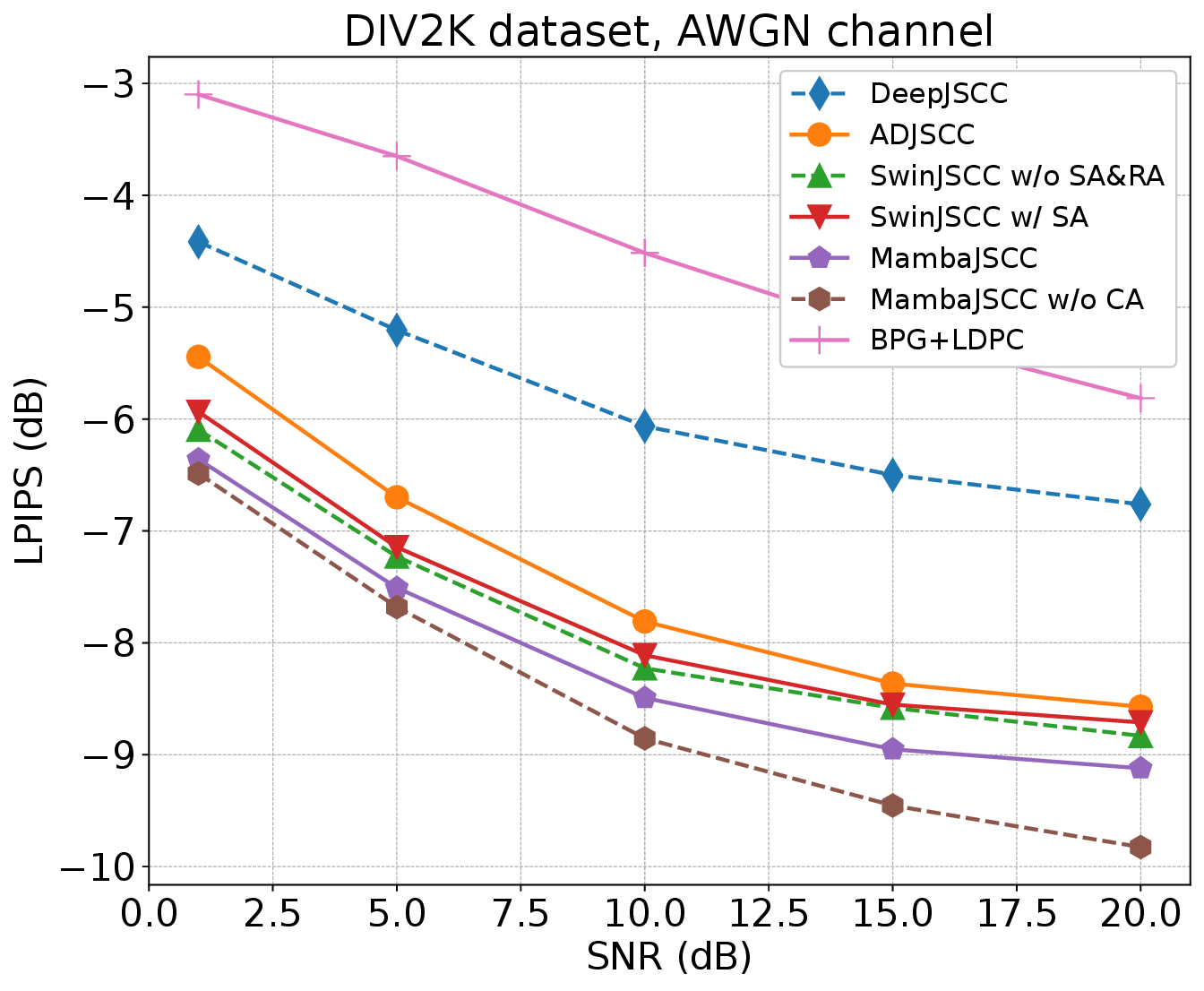}}
    \subfigure[]{\label{DIV2Kd}\includegraphics[width=0.325\textwidth]{./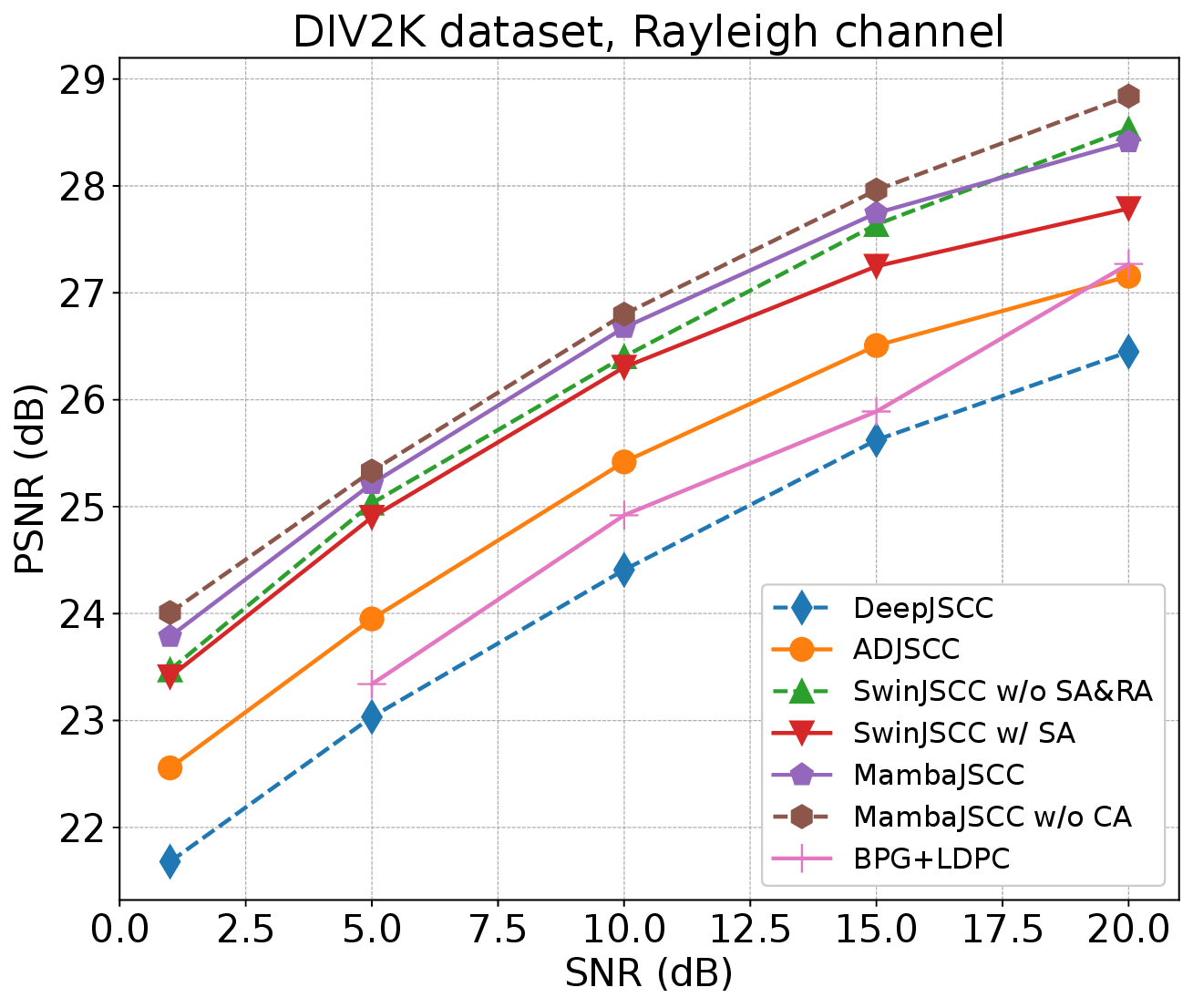}}
    \subfigure[]{\label{DIV2Ke}\includegraphics[width=0.325\textwidth]{./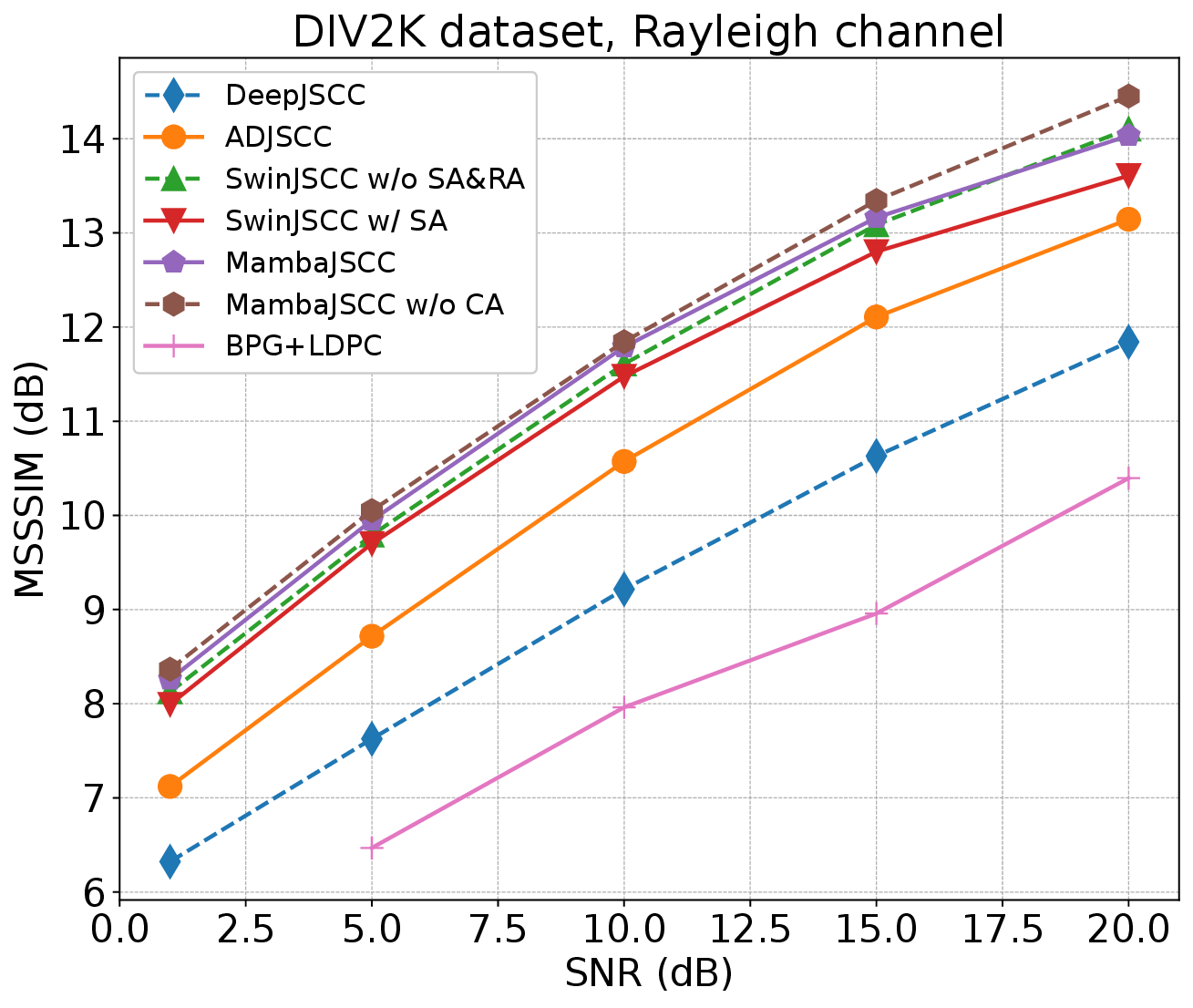}}
    \subfigure[]{\label{DIV2Kf}\includegraphics[width=0.332\textwidth]{./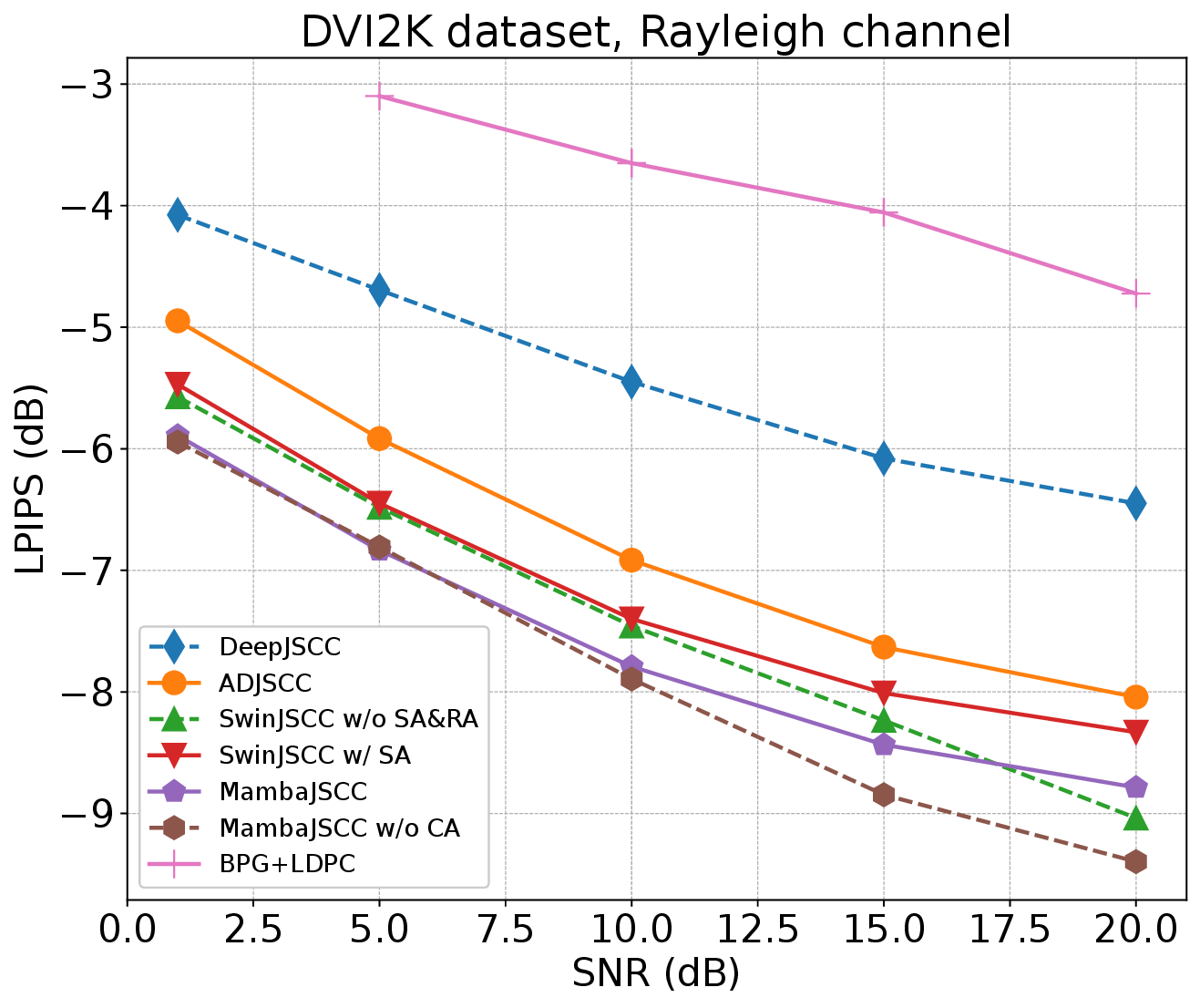}}
    \caption{(a)$\sim$(c) PSNR, MSSSIM and LPIPS performance of different models versus the SNR under the AWGN channel of DIV2K dataset. (e)$\sim$(f) PSNR, MSSSIM and LPIPS performance versus the SNR under the Rayleigh fading channel of DIV2K dataset. The CBR is $\frac{1}{48}$.}
    \label{DIV2K}
    \vspace{-0.3 cm}
    \end{figure*}

  \begin{figure*}[htbp]
    \centering
    \subfigure[]{\label{AFHQa}\includegraphics[width=0.325\textwidth]{./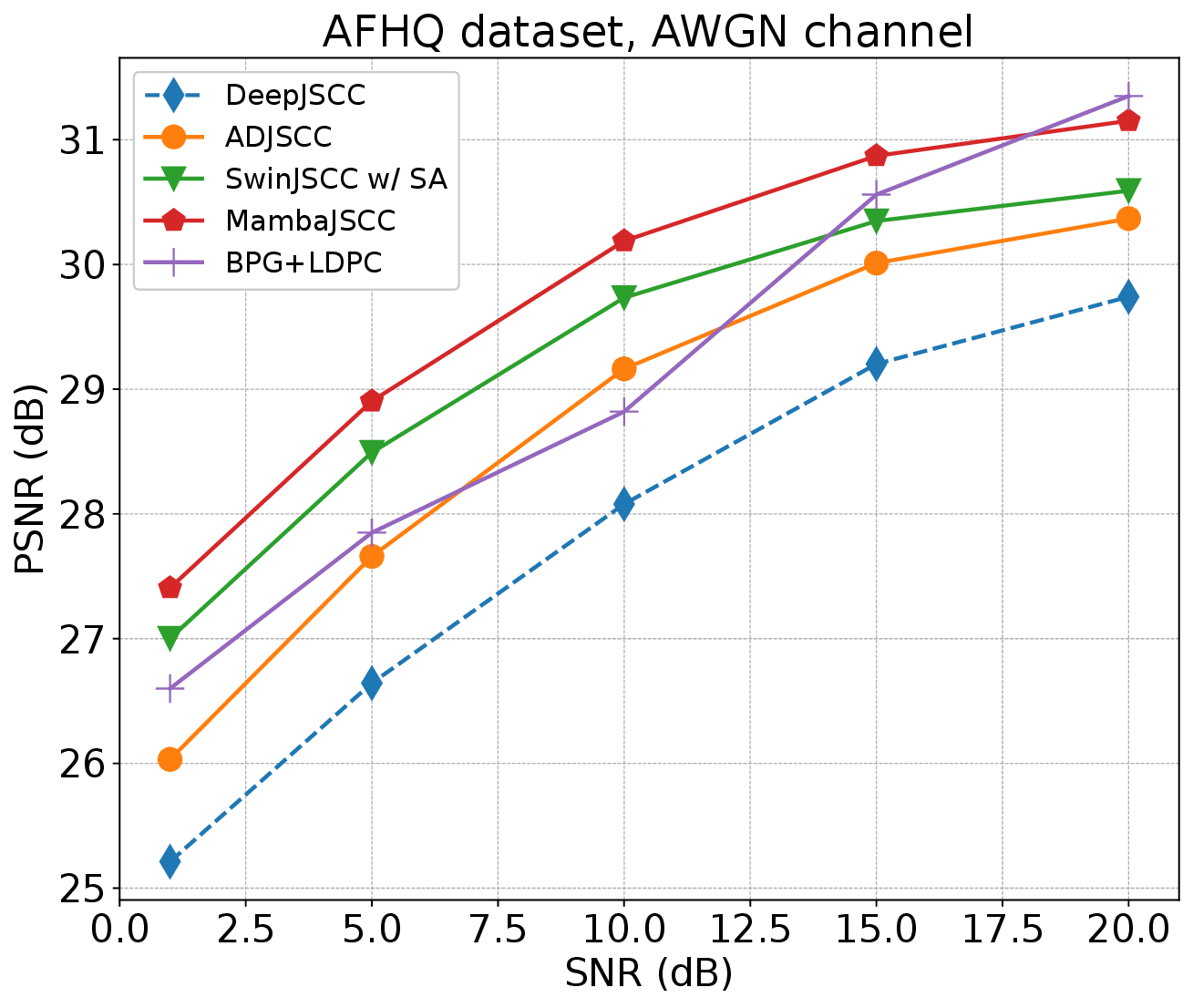}}
    \subfigure[]{\label{AFHQb}\includegraphics[width=0.325\textwidth]{./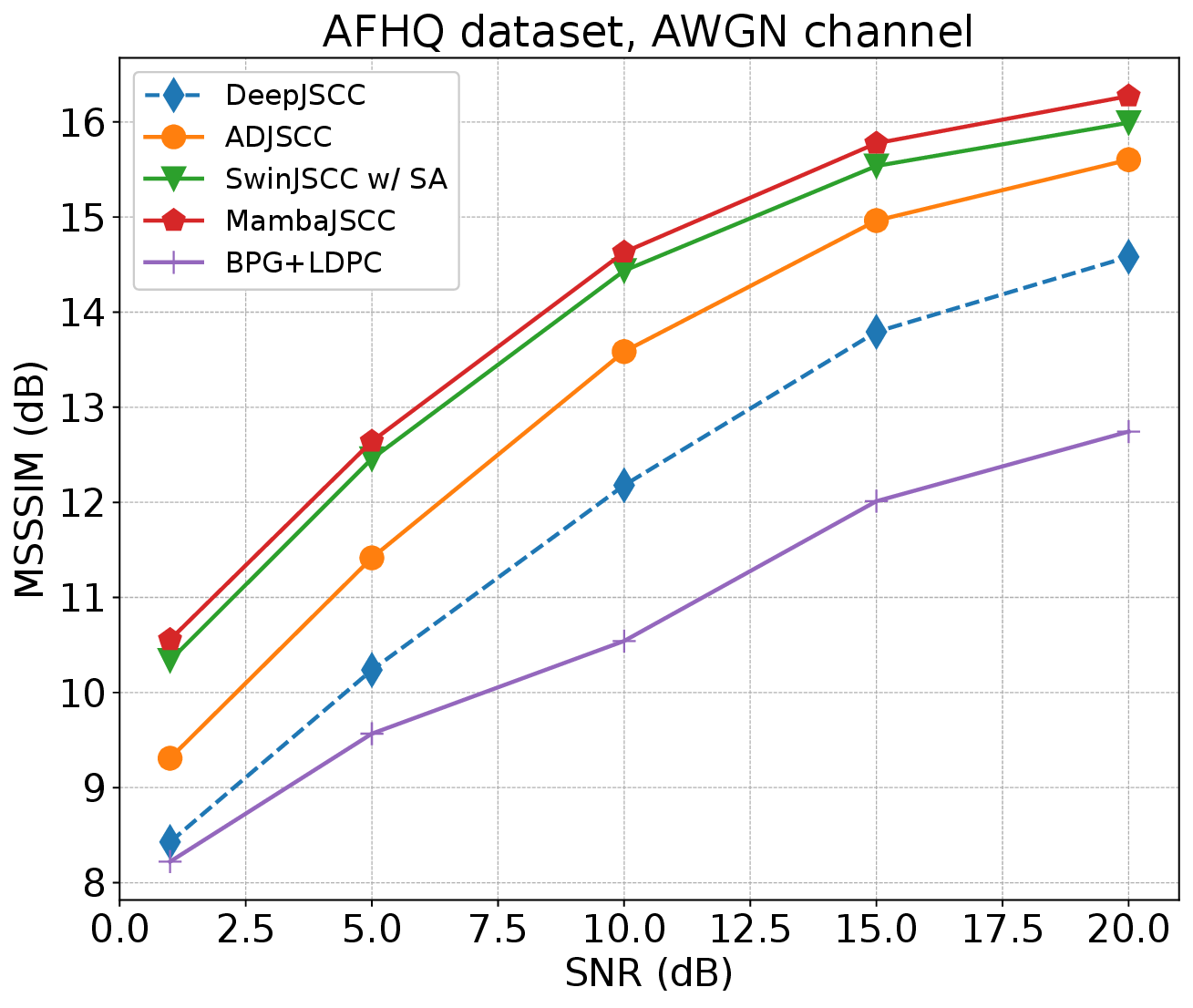}}
    \subfigure[]{\label{AFHQc}\includegraphics[width=0.325\textwidth]{./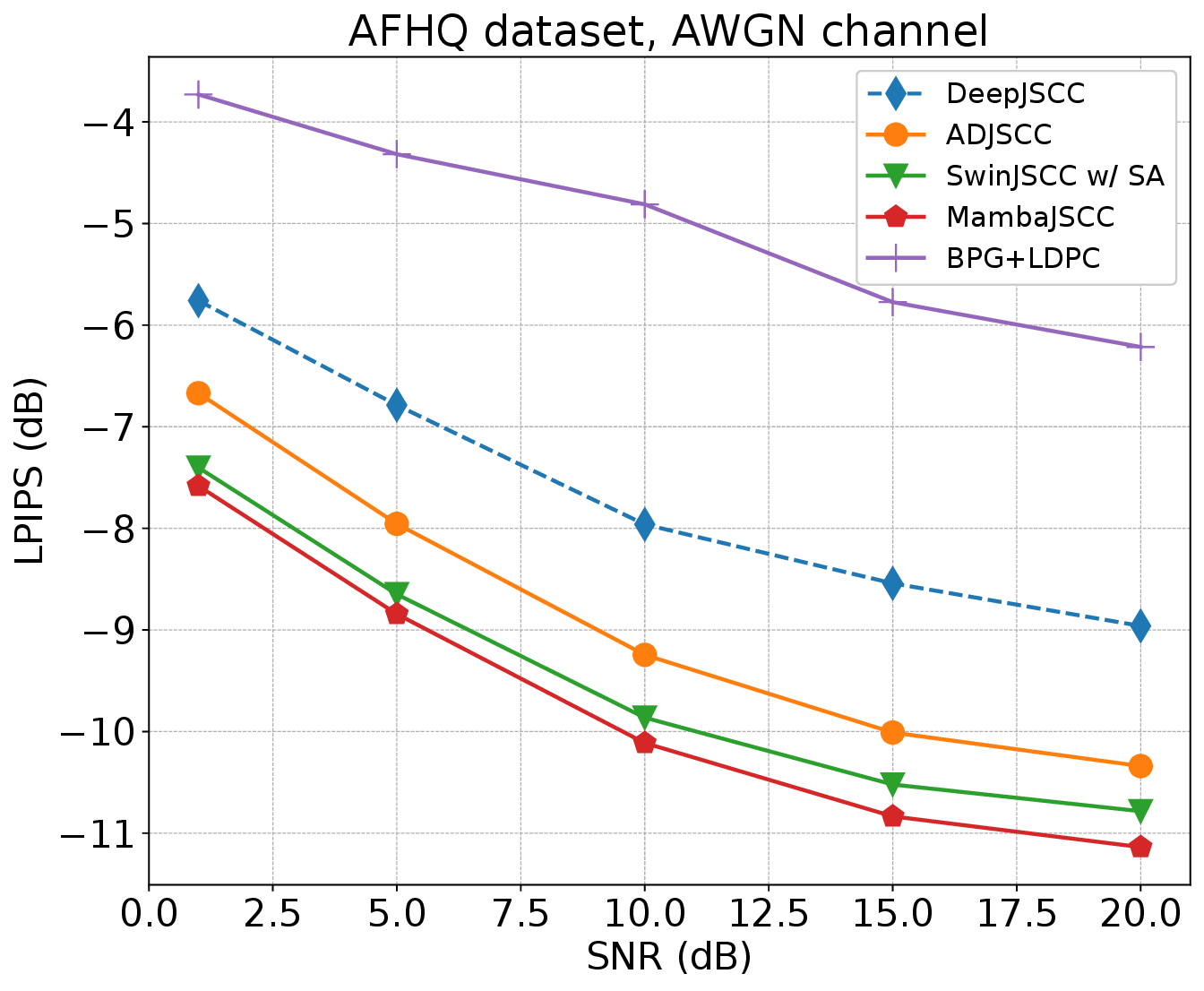}}
    \subfigure[]{\label{AFHQd}\includegraphics[width=0.325\textwidth]{./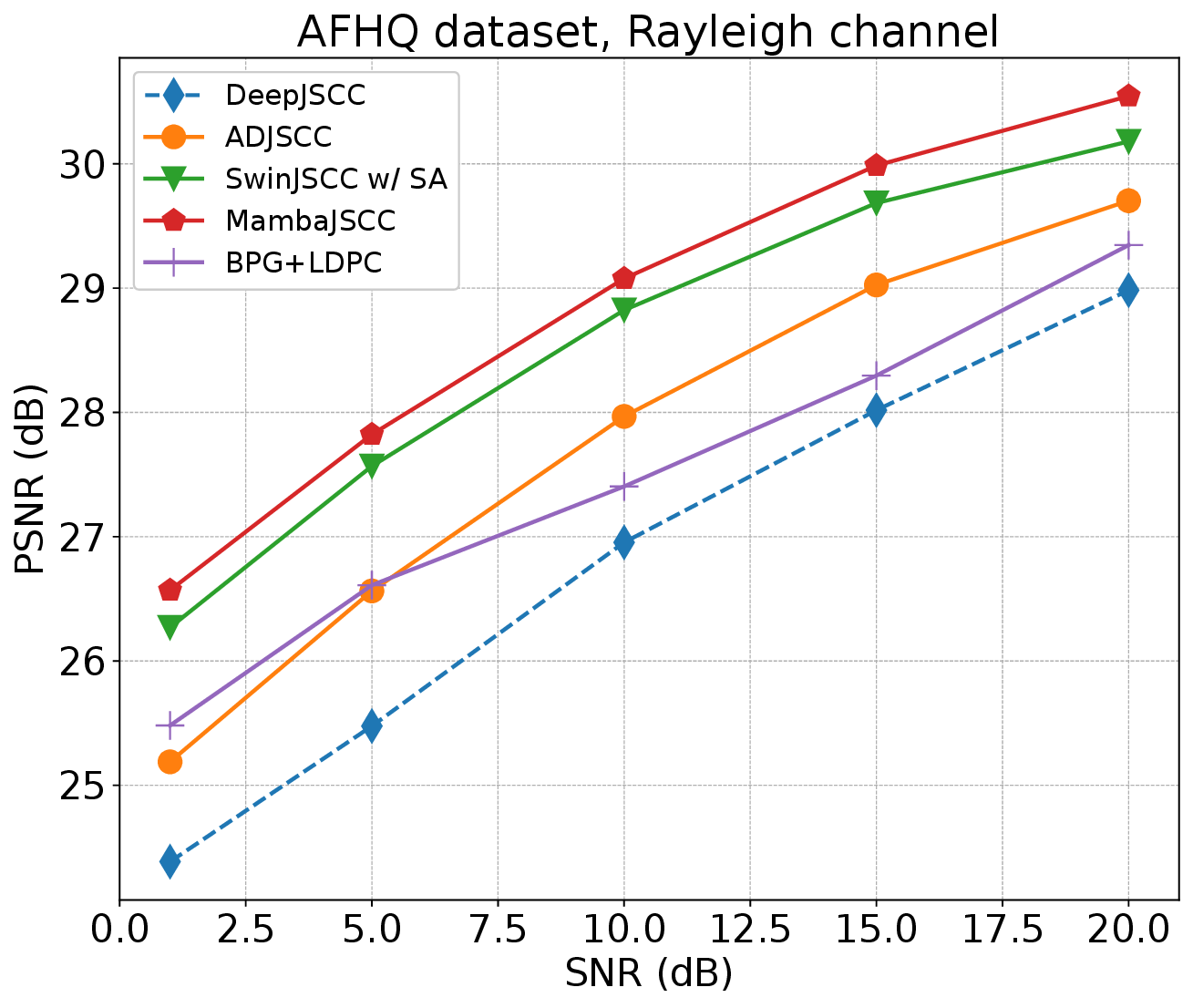}}
    \subfigure[]{\label{AFHQe}\includegraphics[width=0.325\textwidth]{./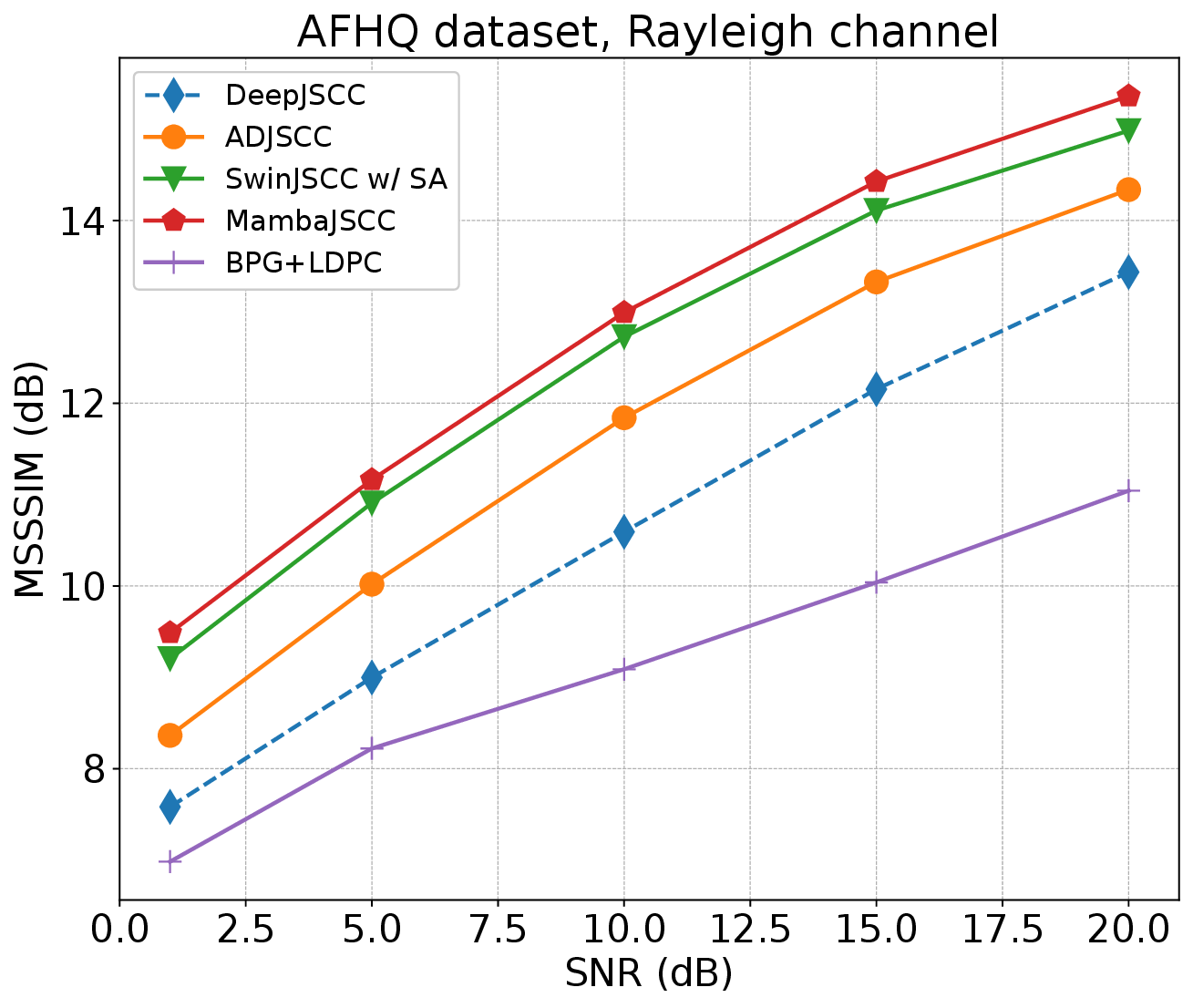}}
    \subfigure[]{\label{AFHQf}\includegraphics[width=0.325\textwidth]{./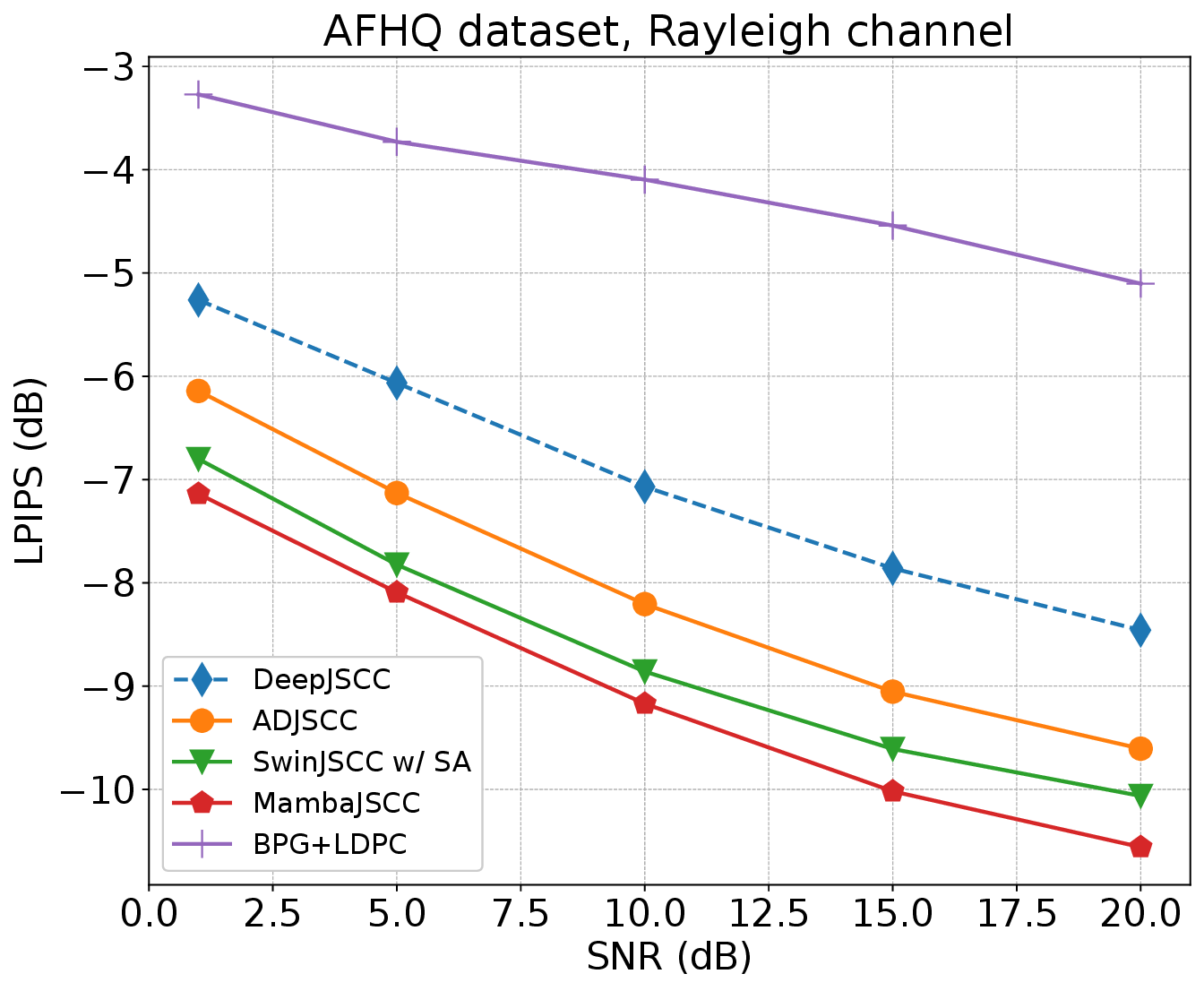}}
    \caption{(a)$\sim$(c) PSNR, MSSSIM and LPIPS performance of different models versus the SNR under the AWGN channel of AFHQ dataset. (e)$\sim$(f) PSNR, MSSSIM and LPIPS performance versus the SNR under the Rayleigh fading channel of AFHQ dataset. The CBR is $\frac{1}{48}$.}
    \label{AFHQ}
    \vspace{-0.3 cm}
    \end{figure*}

To provide a comprehensive evaluation of MambaJSCC, we adopt three datasets with different content and resolutions: CelebA, DIV2K and AFHQ. CelebA contains 150000 images of human faces from well-known figures, DIV2K comprises 800 2K resolution images from a variety of real-world scenes, and AFHQ includes 15000 high-quality images of animal faces for training. For testing, these datasets provide an additional 15000 images, 100 images, and 1500 images, respectively. During  both training and testing, images from CelebA, DIV2K and AFHQ datasets are cropped into varying resolutions of $128\times 128$, $256\times 256$, and $512\times 512$.

Performance is comprehensively evaluated using the PSNR for pixel-wise distortion in reconstruction loss, multi-scale structural similarity index matrices (MSSSIM)\cite{MSSSIM}, and learned perceptual image patch similarity (LPIPS)\cite{LPIPS} for reconstruction loss from the perspective of human perception. For better comparison, the MSSSIM and LPIPS are converted to $dB$ form using the formulas $MSSSIM~(dB)=-10\lg(1-MSSSIM)$ and $LPIPS~(dB)=10\lg(LPIPS)$. In addition to performance metrics, computational complexity is assessed by MACs, which count the number of multiplications and additions required for inference. We also supplement the evaluation by measuring inference delay. Model size is quantified by the number of parameters. MACs are calculated using the Torch operation counter library, and ID is measured on an NVIDIA 4090 GPU.

We conduct a comparative analysis between MambaJSCC and several significant JSCC schemes. The main comparison schemes are as follows: 1) \textbf{MambaJSCC}: The proposed MambaJSCC, utilizing two GSSM modules with CSI-ReST; 2) \textbf{MambaJSCC w/o CA}: The proposed MambaJSCC with the two GSSM modules, but without CSI-ReST; 3) \textbf{SwinJSCC w/ SA}: SwinJSCC with the Channel ModNet, but without the Rate ModNet; 4) \textbf{SwinJSCC w/o SA\&RA}: SwinJSCC without both the Channel ModNet and Rate ModNet; 5) \textbf{ADJSCC}: ADJSCC based on a CNN model with soft-attention that integrates CSI for channel adaptation; 6) \textbf{DeepJSCC}: DeepJSCC, established using pure CNN modules, the first neural network based JSCC model; 7) \textbf{BPG+LDPC}: The classical separation-based source and channel coding schemes, using BPG\cite{BPG} for source coding and 5G NR LDPC\cite{5GNRLDPC} for channel coding. For the BPG+LDPC scheme, we use 5G NR LDPC with code lengths of 8448 and 3840, employing the most suitable code rate and quadrature amplitude modulations. Due to its discrete compression ratio, code rate, and modulation orders, the BPG+LDPC scheme cannot maintain the same channel use as the JSCC models. Therefore, in the simulation, it occupies similar channel resources. MambaJSCC, SwinJSCC w/ SA and ADJSCC incorporate CSI-related designs for channel adaptation, allowing them to be trained across a wide range of SNRs and evaluated with fixed parameters at varying SNRs. In contrast, MambaJSCC w/o CA, SwinJSCC w/o SA\& RA and DeepJSCC are trained and evaluated at fixed SNRs, requiring different parameters for different SNR conditions.

For the basic configurations of MambaJSCC, we use four stages in both the encoder and decoder, with $[n_1^e,n_2^e,n_3^e,n_4^e]=[n_1^d,n_2^d,n_3^d,n_4^d]=[2,2,6,2]$ and $[c_1^e,c_2^e,c_3^e,c_4^e]=[c_1^d,c_2^d,c_3^d,c_4^d]=[128,192,256,320]$. To adjust the model scales, we fix the block number in Stage 3  at $n_3^e=n_3^d$ and vary their values $N_m$. All models are trained using the Adam optimizer until convergence on NVIDIA 4090 GPUs, with an initial learning rate of $10^{-4}$ and the same batch size across models. The loss function used depends on the evaluation metric: mean square error (MSE) loss for PSNR performance, 1-MSSSIM for MSSSIM performance, and LPIPS loss for LPIPS performance.

\subsection{Results Analysis}
In this subsection, we compare the proposed MambaJSCC with other schemes to verify the advantages in terms of performance, computational complexity, and parameter size. Fig. \ref{visual} visualizes the reconstruction results of different schemes over the Rayleigh fading channel at SNR=$5$ dB. As shown, the visual reconstruction quality of MambaJSCC is the best among the five schemes. For example, in the CelebA dataset (first row), the image produced by MambaJSCC closely matches the original image, particularly in the color of the right side of the face, whereas the image from SwinJSCC w/ SA has a reddish tint. The ADJSCC image shows some fake details in the mouth and eyebrow regions. The images from DeepJSCC and the BPG+LDPC scheme are extremely blurred. Similarly, in the DIV2K dataset, MambaJSCC yields the highest quality reconstruction, while the other images progressively degrade in clarity. In the AFHQ dataset, the image from MambaJSCC remains the most details around the cat's face compared to the other schemes.

Fig. \ref{CelebA}-\ref{AFHQ} present the PSNR, MSSSIM, and LPIPS performance of the seven main schemes on the CelebA, DIV2K and AFHQ datasets versus SNR, under both AWGN and Rayleigh fading channels. Solid lines represent the performance of models with channel adaptation, while dashed lines represent models trained for specific SNR values. We can see that the DeepJSCC and ADJSCC schemes perform the worst among the six JSCC-based schemes across all three datasets and evaluation matrices. For example, compared to MambaJSCC at SNR=$1$ dB under the Rayleigh fading channel, the DeepJSCC and ADJSCC schemes perform $2.43$ and $1.35$ dB worse in PSNR on CelebA dataset, $1.94$ and $1.14$ dB worse in MSSSIM on DIV2K dataset, and $1.88$ and $1.00$ dB in LPIPS on AFHQ dataset. The classical separated-based method performs significantly worse in most scenarios, particularly under Rayleigh fading channels, at low SNRs, with low-resolution images, and in terms of the MSSSIM and LPIPS metrics. It only performs comparably to MambaJSCC on high-resolution images of the AFHQ dataset under the AWGN channel at SNR=$20$ dB in terms of PSNR. However, despite these constraints, the BPG+LDPC scheme still underperforms compared to the specialized MambaJSCC w/o CA, further demonstrating the advantages of MambaJSCC over seperated-based methods. 

When compared to the advanced SwinJSCC w/ SA, MambaJSCC outperforms it in all scenarios with the same structure. For example, in terms of PSNR, MSSSIM, and LPIPS, MambaJSCC achieves up to $0.51$, $0.61$, $0.34$ dB gains under the AWGN channel, and $0.52$, $0.57$, $0.34$ gains under the Rayleigh fading channel on CelebA dataset. On the DIV2K dataset, the corresponding performance gains are $0.52$, $0.58$, $0.41$ dB under the AWGN channel, and $0.63$, $0.43$, $0.42$ under the Rayleigh fading channel. For the AFHQ dataset, these gains are $0.56$, $0.29$, $0.35$ dB under the AWGN channel, and $0.36$, $0.38$, $0.50$ dB under the Rayleigh fading channel. Furthermore, the gains are relatively significant. For example, the performance improvements in PSNR, MSSSIM, and LPIPS between the Transformer-based SwinJSCC compared and the CNN-based ADJSCC are only $0.63$, $0.49$, and $0.29$ dB, respectively, on the DIV2K dataset under the Rayleigh fading channel at SNR=$20$ dB. However, MambaJSCC achieve similar or even larger gains compared to SwinJSCC, further highlighting its superior performance. 

On the other hand, MambaJSCC w/o CA and SwinJSCC w/o SA\&RA, which are trained and evaluated separately for each SNR, show better performance than MambaJSCC and SwinJSCC w/ SA because their parameters are specially trained for each SNR, whereas MambaJSCC and SwinJSCC are trained for a range of SNRs. Nonetheless, MambaJSCC experiences only minimal performance loss with a single model, particularly at low SNRs. For example, the PSNR, MSSSIM, and LPIPS performance gaps between MambaJSCC and MambaJSCC w/ CA under the Rayleigh fading channel at SNR=$10$ dB are only $0.11$, $0.10$, and $0.10$ dB on CelebA datasets. Similar performance gaps are observed in other scenarios, demonstrating that MambaJSCC with the proposed CSI-ReST channel adaptation method, as a universal model, can achieve performance comparable to  the specialized model. Moreover, Fig. \ref{CBRlawPSNR} shows the PSNR performance on the DIV2K dataset versus CBR, confirming that the performance gains are stable across varying CBR values. The consistent performance gains across all three evaluation metrics, under both channels, and across different  SNRs and CBRs on the three datasets, strongly demonstrate that MambaJSCC consistently achieves the best performance among the existing main JSCC schemes. 

In addition to delivering excellent performance, MambaJSCC also offers significant lightweight advantages, as shown in Table \ref{tab1}. It requires only 72\% of the MACs and 51\% of the parameters needed by SwinJSCC w/ SA for inference. As a supplement to the MACs, we also provide the ID on the DIV2K dataset. Although the DeepJSCC and ADJSCC schemes show some room for performance improvement, their parameter size and the MACs of DeepJSCC are relatively small. Overall, the experimental results clearly demonstrate the advantages of the Mamba model, which not only achieves outstanding performance but also maintains low computational and parameters overhead.

It is well known that the deep learning models follow the scaling law, which suggests that performance may improve as the parameter size increases. Therefore, we modify the structure by adding more VSSM-CA blocks to create a large model and compare its performance with SwinJSCC w/ SA, illustrating the potential performance gains at a similar parameter cost, in accordance with the scaling law. Specifically, we increase the $N_m$ from $6$ to $17$, which raises the parameters count from $14.54$M to $28.09$M, comparable to the $28.24$M parameters of SwinJSCC w/ SA. The performance under both channels on the DIV2K dataset is shown in Fig. \ref{scaling}. For example, the LPIPS performance gain increases from $0.41$ dB to $0.90$ dB under the AWGN channel at an SNR of $20$ dB, further demonstrating the significant performance boost of MambaJSCC.

\subsection{Ablation Experiment}


In this subsection, we remove specific modules from MambaJSCC to evaluate their individual contributions to performance. First, we demonstrate the effectiveness of CSI-ReST for channel adaptation. We replace CSI-ReST in MambaJSCC with the Channel ModNet proposed in SwinJSCC, referring to this variant as MambaJSCC w/o CA with Channel ModNet. We also use MambaJSCC w/o CA but modify its training scheme to cover a wide range of SNRs, ensuring consistency with the training scheme of other models. 

\begin{figure}[t]
  \centering
  \includegraphics[width=0.42\textwidth]{./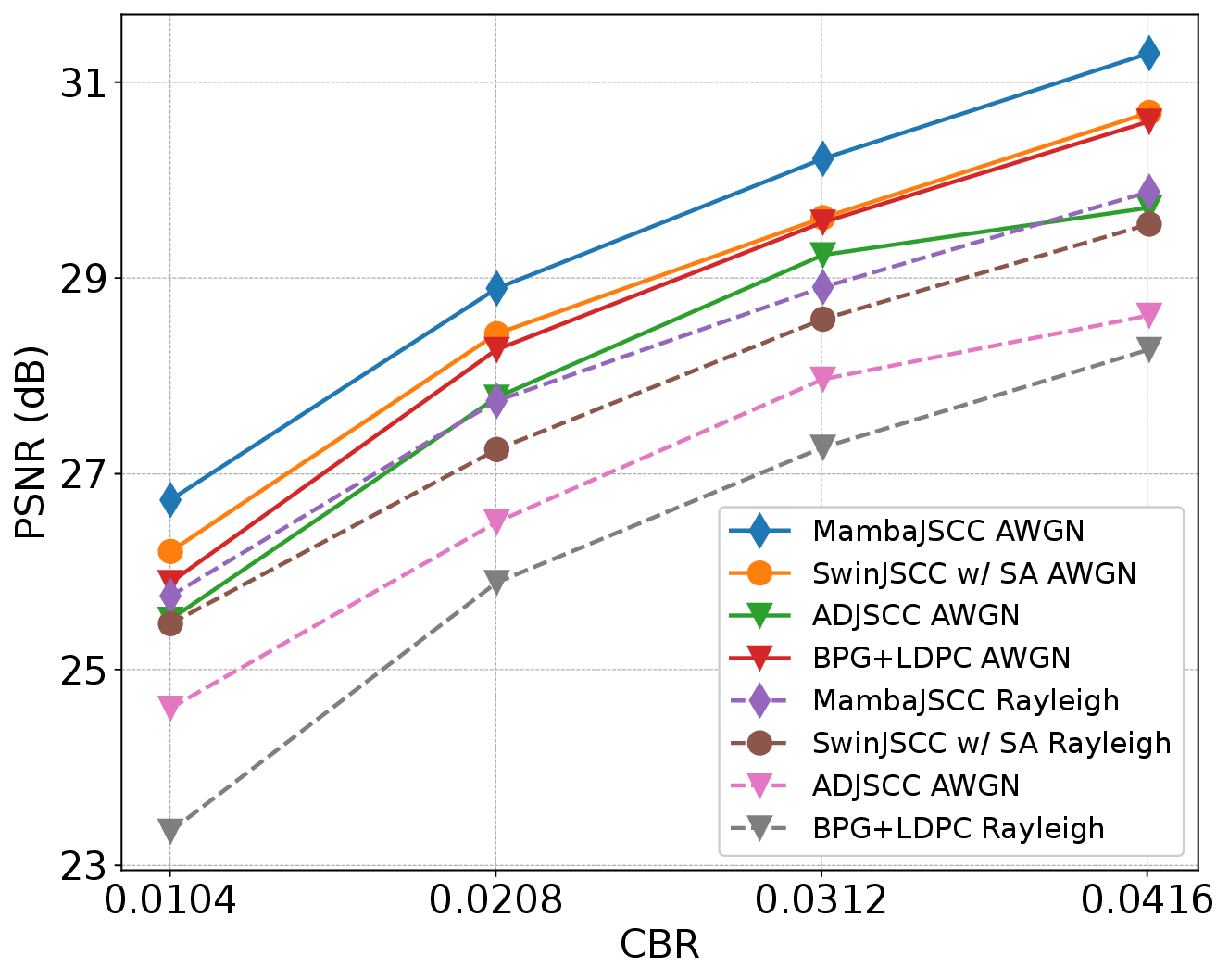}
  \caption{The PSNR of different models with channel adaptation ability versus the CBRs under both the AWGN and Rayleigh fading channels on the DIV2K dataset. The SNR is set to $15$ dB.}
  \label{CBRlawPSNR}
  \vspace{-0.5 cm}
  \end{figure}

\begin{figure}[t]
  \centering
  {\includegraphics[width=0.42\textwidth]{./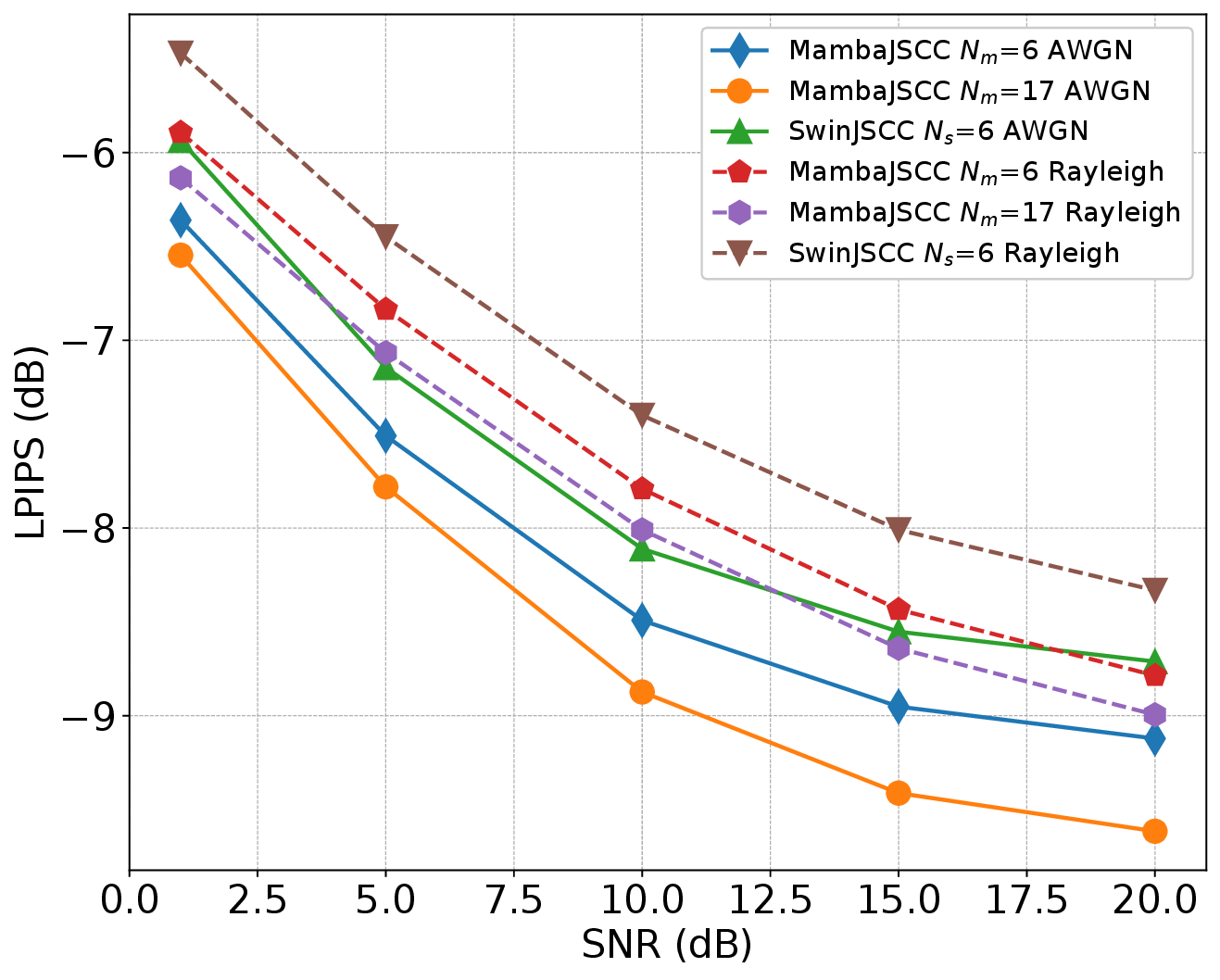}}
  \caption{LPIPS performance of models with different block numbers versus the SNR under the AWGN and Rayleigh fading channels on the DIV2K dataset. The CBR is set to $\frac{1}{48}$.}
  \label{scaling}
  \vspace{-0.3 cm}
  \end{figure}
\begin{figure*}[t]
  \centering
  \subfigure[]{\includegraphics[width=0.325\textwidth]{./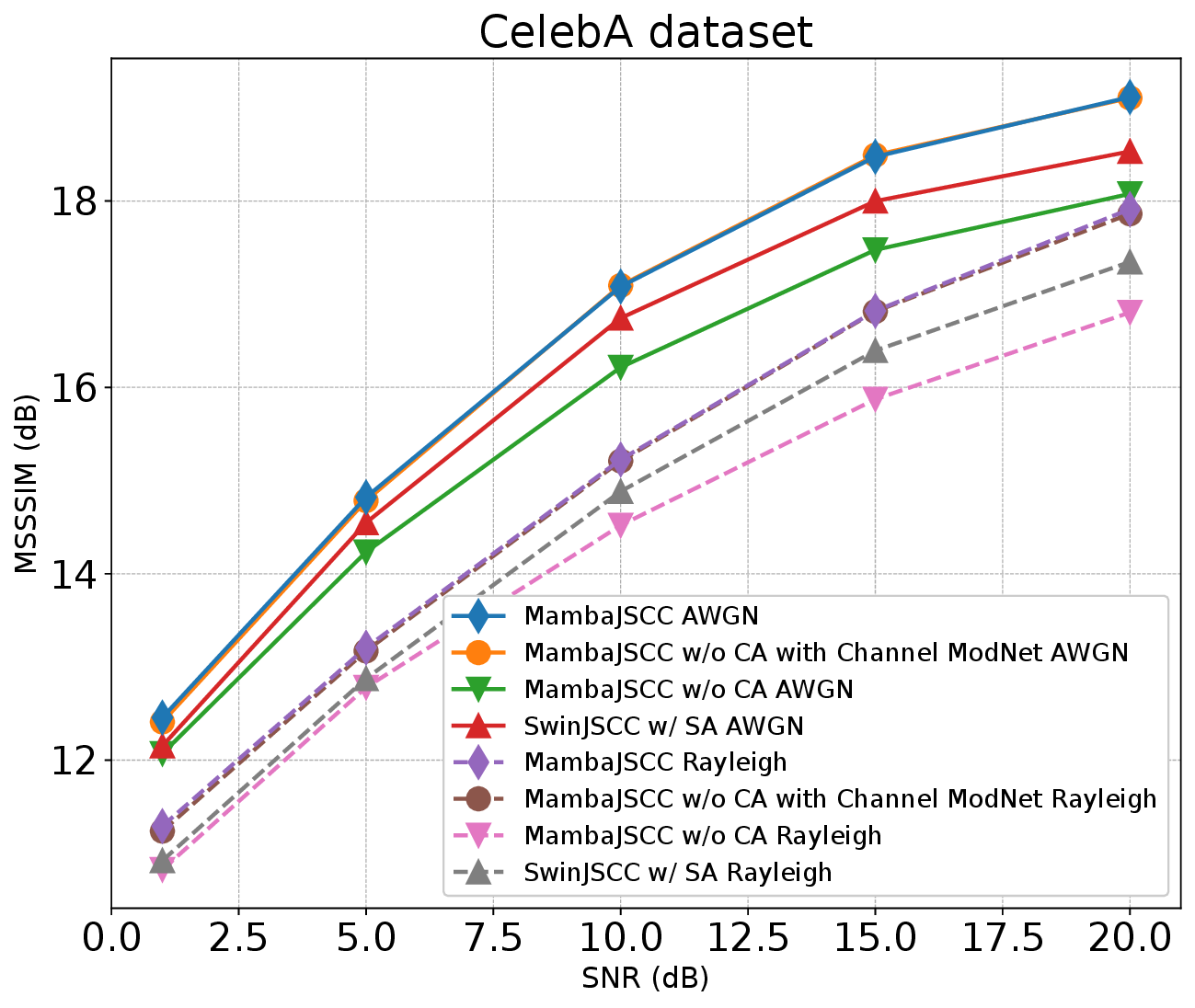}}
  \subfigure[]{\includegraphics[width=0.335\textwidth]{./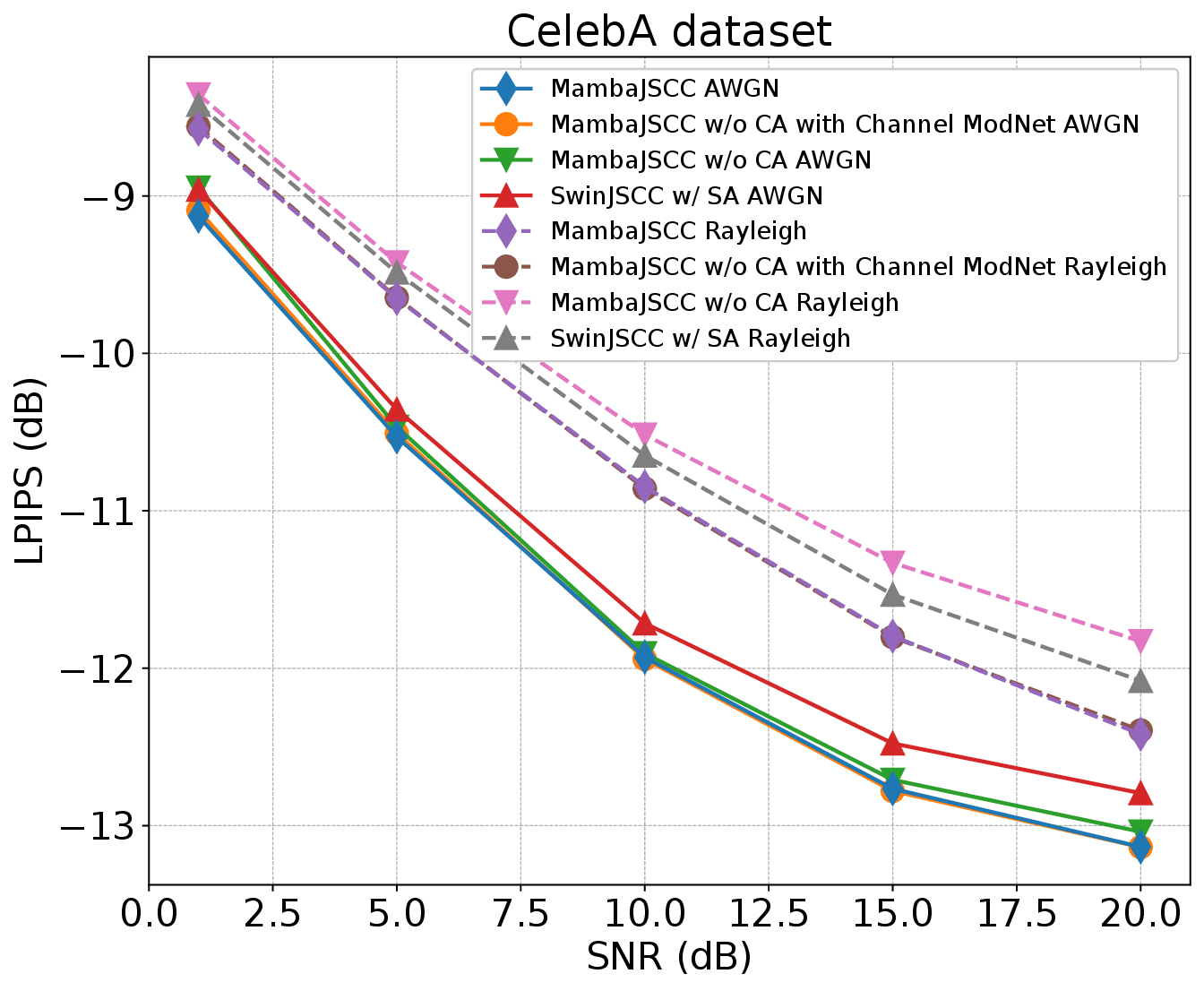}}
  \subfigure[]{\includegraphics[width=0.325\textwidth]{./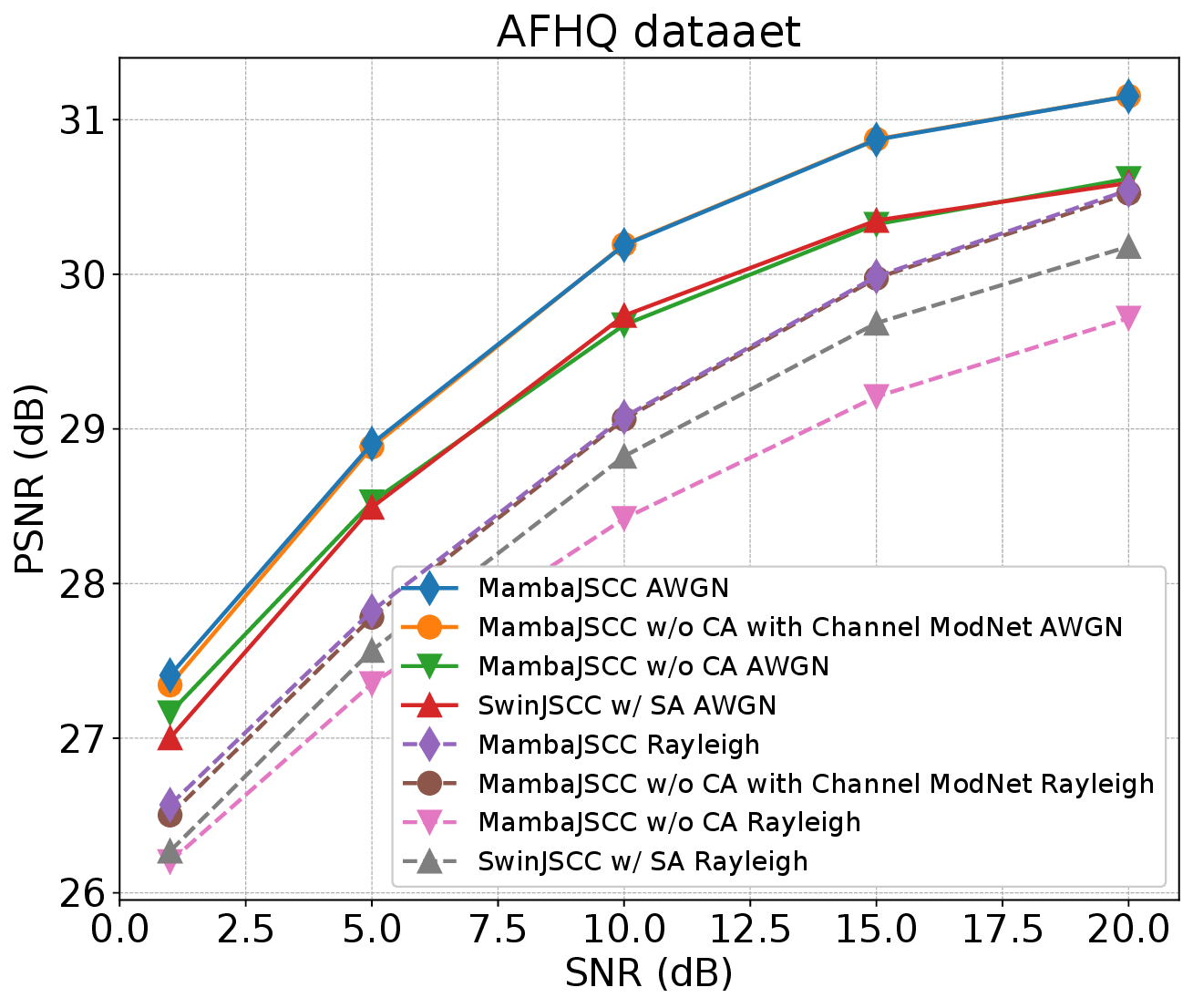}}
  \caption{MSSSIM, LPIPS and PSNR performance of models with different channel adaptation methods versus the SNR under the AWGN and Rayleigh fading channels on the CelebA and AFHQ datasets. The CBR is set to $\frac{1}{48}$.}
  \label{CelebA_adp}
  \end{figure*}
  \begin{table}[t]
    \renewcommand\arraystretch{1.5}
    \centering
    \caption{{MACs, ID and Parameters of different models on the DIV2K and AFHQ datasets. The CBR is set to $\frac{1}{48}$}.}
    \label{tab1}
    \resizebox{\linewidth}{!}{
    \begin{tabular}{|c|c|c|c|c|c|c|}\hline
    \multirow[vpos]{2}{*}{Schemes} & \multicolumn{2}{|c|}{DIV2K} & \multicolumn{2}{|c|}{AFHQ} & \multirow[vpos]{2}{*}{Paras(M)}\\ \cline{2-5}
    & MACs(G) & ID(ms) &MACs(G) & ID(ms) & \\
    \hline
    {MambaJSCC}      & 25.76 & 6.11 & 103.04 &25.10 &14.54\\\hline
    {SwinJSCC w/ SA} & 35.54 & 6.65 & 142.14 &26.40 &28.24\\\hline
    {ADJSCC}         & 46.21 & 7.67 & 184.85 &11.87 &10.57\\\hline
    {DeepJSCC}       & 0.72  & 0.71 & 2.88   &1.16  &0.16 \\\hline
    \end{tabular}}
    \vspace{-0.3 cm}
    \end{table}
\begin{table}[t]
  \renewcommand\arraystretch{1.5}
  \centering
  \caption{{MACs, Parameters and the PSNR performance on the AWGN channel with SNR=$20$ dB of MambaJSCC with different module on the CelebA dataset. The CBR is set to $\frac{1}{48}$.}}
  \label{tab2}
  \resizebox{\linewidth}{!}{
  \begin{tabular}{|c|c|c|c|}\hline

  \multicolumn{1}{|c|}{Models}& MACs (G) & Parameters (M) & PSNR (dB)\\
  \hline
  MambaJSCC & 6.44 & 14.54 & 31.29\\\hline
  MambaJSCC w/o CA & 6.44 & 14.54 & 30.39 \\\hline
  {MambaJSCC w/o CA} & \multirow[vpos]{2}{*}{6.44+0.22}  &\multirow[vpos]{2}{*}{14.54 + 9.87}& \multirow[vpos]{2}{*}{31.28}\\ 
  {with Channel ModNet} & & &\\\hline

  \end{tabular}}
  \vspace{-0.3 cm}
  \end{table}

  \begin{table}[t]
    \renewcommand\arraystretch{1.5}
    \centering
    \caption{{The PSNR, MACs and Parameters of MambaJSCC with different modules on the DIV2K dataset. The SNR is $20$ dB.}}
    \label{tab3}
    \resizebox{\linewidth}{!}{
    \begin{tabular}{|c|c|c|c|c|c|}\hline
    
    {$N_s$}& MLP & GSSM number & PSNR(dB) & MACs(G) &Para(M)\\
    \hline
    \multirow[vpos]{3}{*}{$6$} & \checkmark & 2 & 29.26 & 25.76 & 14.54 \\ 
     & \usym{2717} &2 &  29.17 &20.54 & 11.70 \\
     & \checkmark &4 & 29.26 & 29.03& 16.38 \\\hline
    17 & \checkmark & 2 & 29.49 & 39.21 & 28.09 \\ \hline 
    \end{tabular}}
    \vspace{-0.5 cm}
    \end{table}
As shown in Fig. \ref{CelebA_adp}, MambaJSCC significantly outperforms MambaJSCC w/o CA on the CelebA dataset, with maximum gains of $1.03$ in MSSSIM and $0.35$ dB in LPIPS under the AWGN channel, and $1.10$ and $0.60$ dB, respectively, under the Rayleigh fading channel. In contrast, MambaJSCC w/o CA with Channel ModNet achieves nearly the same performance as MambaJSCC.

The same phenomenon is observed on the AFHQ dataset, as shown in Fig. \ref{CelebA_adp}(c). These experimental results demonstrate that the CSI-ReST channel adaptation method significantly improves the performance of a single model across varying SNRs and achieves the same channel adaptation capability as the Channel ModNet method. Moreover, despite delivering comparable performance, the CSI-ReST method is a zero-parameter, zero-computational method designed by utilizing the feature of GSSM, whereas the Channel ModNet is a plug-in module that requires additional computational and parameter overhead. As shown in Table \ref{tab2}, MambaJSCC maintains the same MACs and parameters count as MambaJSCC w/o CA, yet improves performance by $0.90$ dB. In contrast, Channel ModNet introduces an additional $0.22$G MACs on the CelebA dataset with 9.87M parameters. 

Secondly, we increase the number of GSSM modules addressing each sequence from two to four by adding two additional scan directions: one expanding the patch in the hierarchical direction and the other in its reverse direction, as proposed in \cite{VSSM}. As shown in Table \ref{tab3}, despite using four GSSM modules, the performance remains at $29.26$ while the MACs and parameter count increase by $3.27$G and $1.84$M, respectively. This fact confirms Proposition \ref{pro4}, which two GSSM modules are sufficient to capture global information. Thirdly, we remove the channel MLP from the VSSM-CA module, resulting in a $0.1$ dB performance drop, but also reduces the computation overhead by $5.22$G MACs and the parameter count by $2.84$M. In comparison to the scaling law experiment, increasing $N_s$ to $17$ only achieved a $0.23$ dB gain, with an additional $13.45$G MACs and $13.55$M parameters. A six-fold increase in parameters only yielded a 2.3-fold performance improvement. Thus, the inclusion of the channel MLP module is an effective design choice.



\section{CONCLUSION}\label{V}
In this paper, we propose MambaJSCC, a novel lightweight and efficient JSCC architecture for wireless image transmission. By designing GSSM modules expressed by reversible matrix transformations, we prove that MambaJSCC effectively captures global information while maintaining low complexity. Moreover, we develop the zero-parameter, zero-computation CSI-ReST method to harness the endogenous capability of GSSM for channel adaptation. Extensive experiments confirm that MambaJSCC outperforms existing JSCC architectures, including SwinJSCC, in terms of distortion and perception performance, parameter size, computational overhead and inference delay. For example, when using an equal number of VSSM-CA blocks and Swin Transformer blocks as SwinJSCC, MambaJSCC achieves a $0.52$ dB gain in PSNR and $0.41$ dB gain in LPIPS while requiring only 72\% MACs, 51\% of the parameters, and 91\% of ID.

\footnotesize
\bibliographystyle{IEEEtran}
\bibliography{reference}{}
\end{document}